\newcommand{\eps}{\e}
\newcommand{\strength}{\gamma}
\newcommand{\cost}{\zeta}
\newcommand{\component}{\kappa}
\begin{document}

\title{A note on approximate strengths of edges in a hypergraph}
\author{
  Chandra Chekuri\thanks{Department of Computer Science, 
    University of Illinois, Urbana, IL 61801. \textsc{chekuri@illinois.edu}. Work on this paper supported in part by NSF grant CCF-1526799.}
\and
    Chao Xu\thanks{Department of Computer Science, University of Illinois, Urbana, IL 61801. \textsc{chaoxu3@illinois.edu}. Work on this paper
supported in part by NSF grant CCF-1526799.}
}
\date{\today}

\maketitle

\begin{abstract}
  Let $H=(V,E)$ be an edge-weighted hypergraph of rank $r$.  Kogan and
  Krauthgamer~\cite{KoganK15} extended Bencz\'{u}r and Karger's
  \cite{BenczurK15} random sampling scheme for cut sparsification from
  graphs to hypergraphs. The sampling requires an algorithm for
  computing the approximate strengths of edges. In this note we extend
  the algorithm for graphs from \cite{BenczurK15} to hypergraphs and
  describe a near-linear time algorithm to compute approximate
  strengths of edges; we build on a sparsification result for
  hypergraphs from our recent work \cite{ChekuriX17}. Combined with
  prior results we obtain faster algorithms for finding
  $(1+\eps)$-approximate mincuts when the rank of the
  hypergraph is small.
\end{abstract}

\section{Introduction}
\label{sec:intro}
Bencz\'{u}r and Karger, in their seminal work \cite{BenczurK15}, showed
that all cuts of a weighted graph $G=(V,E)$ on $n$ vertices can be
approximated to within a $(1\pm\eps)$-factor by a sparse weighted graph
$G'=(V,E')$ where $|E'| = O(n \log n/\eps^2)$. Moreover, $G'$ can be
computed in near-linear time by a randomized algorithm. The algorithm
has two steps. In the first step, it computes for each edge $e$ a
number $p_e$ which is proportional to the inverse of the approximate strength of edge
$e$. The second step is a simple sampling scheme where each edge is
independently sampled with probability $p_e$ and if it is chosen, the
edge $e$ is assigned a capacity $u_e/p_e$ where $u_e$ is the original
capacity/weight of $e \in E$.  It is shown in \cite{BenczurK15} that
the probabilities $p_e, e \in E$ can be computed by a deterministic
algorithm in $O(m \log^3 n)$ time where $m = |E|$ if $G$ is weighted
and in $O(m \log^2 n)$ time if $G$ is unweighted or has polynomially
bounded weights. More recent work \cite{FungHHP11} has shown 
a general framework for cut sparsification where the sampling probability
$p_e$ can also be chosen to be approximate connectivity between the end
points of $e$. In another seminal work, Batson, Spielman and Srivastava
\cite{BatsonSS12} showed that spectral-sparsifiers, which are stronger
than cut-sparsifiers, with $O(n/\eps^2)$ edges
exist, and can be computed in polynomial time. Lee and Sun recently showed
such sparsifiers can be computed in $\tilde{O}(m)$ time, 
where $\tilde{O}$ hide polylog factors \cite{LeeS17}. 

In this paper, we are interested in hypergraphs. $H=(V,E)$ is
a hypergraph where each $e \in E$ is a subset of nodes. Graphs are a special
case when each $e$ has cardinality $2$. The rank of a hypergraph $H=(V,E)$
is $\max_{e \in E} |e|$. Recently Kogan and
Krauthgamer \cite{KoganK15} extended Bencz\'{u}r and Karger's sampling
scheme and analysis to show the following: for any weighted hypergraph
$H=(V,E)$ with rank $r$ there is a $(1\pm\eps)$-approximate
cut-sparsifier with $O(n(r + \log n)/\eps^2)$ edges. They show that
sampling with (approximate) strengths will yield the desired sparsifier.
Finding the strengths of edges in hypergraphs can be easily
done in polynomial time. In this note, we develop a near-linear time
algorithm for computing approximate strengths of edges in a weighted
hypergraph. We state a formal theorem and indicate some applications
after we formally define strength of an edge.

\paragraph{Strength of an edge:}
Let $H=(V,E)$ be an \emph{unweighted} hypergraph.  For $U \subseteq
V$, $H[U]=(U,\set{e|e\subset U, e\in E})$ denotes the vertex induced
subhypergraph of $H$. For $A \subset V$ we denote by $\delta_H(A)$ the
set of edges that cross $A$; formally $\delta_H(A) = \{ e \in E \mid e
\cap A \neq \emptyset, e \cap (V \setminus A) \neq \emptyset\}$.  A
hypergraph is $k$-edge-connected if $|\delta_H(A)| \ge k$ for every
non-trivial cut $\emptyset \subsetneq A \subsetneq V$.  The
\EMPH{edge-connectivity} of $H$, denoted by $\lambda(H)$, is the
largest $k$ such that $H$ is $k$-edge-connected. Equivalently,
$\lambda(H)$ is the value of the min-cut in $H$.  The \EMPH{strength}
of an edge $e$, denoted by $\strength_H(e)$, is defined as
$\max_{e\subset U\subset V} \lambda(H[U])$; in other words the largest
connectivty of a vertex induced subhypergraph that contains $e$.  We
also define the cost $\cost_H(e)$ of $e$ as the inverse of the
strength; that is $\cost_H(e) = 1/\strength_H(e)$. We drop the
subscript $H$ if there is no confusion regarding the hypergraph.

The preceding definitions generalize easily to weighted hypergraphs.
Let $w(e)$ be a non-negative integer weight for edge $e$. Then
the notion of edge-connectivity easily generalizes via the weight of
a mincut. The strength of an edge $e$ is the maximum mincut value
over all vertex induced subhypergraphs that contain $e$.

For a hypergraph $H$ we use $n$ to denote the number of vertices,
$m$ to denote the number of edges and $p = \sum_{e \in E(H)} |e|$ to
denote the total degree. We observe that $p$ is the natural representation
size of $H$ when the rank of the hypergraph is not bounded.

Our main technical result is the following.
\begin{theorem}
  \label{thm:main}
  Let $H=(V,E)$ be a edge-weighted hypergraph on $n$ vertices.
  There is an efficient algorithm that computes for each edge
  $e$ an approximate strength $\strength'(e)$ such that the following
  properties are satisfied:
  \begin{enumerate}
  \item lower bound property: $\strength'(e)\leq \strength_H(e)$ and 
  \item $c$-cost property: $\sum_{e\in E} \frac{1}{\strength'(e)} \leq c(n-1)$
    where $c = O(r)$.
  \end{enumerate}
For unweighted hypergraphs the running time of the algorithm is
$O(p \log^2 n)$ and for weighted hypergraphs the running time
is $O(p \log p \log^2 n)$.
\end{theorem}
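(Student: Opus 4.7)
My plan is to extend Benczúr and Karger's near-linear-time strength-estimation algorithm \cite{BenczurK15} from graphs to hypergraphs, replacing their sparse graph connectivity-certificate subroutine with the near-linear-time hypergraph $k$-connectivity sparsifier from our prior work \cite{ChekuriX17}.

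The algorithm operates in $O(\log n)$ phases indexed by $i = 0, 1, 2, \ldots$ with threshold $k_i = 2^i$. Starting with $H_0 = H$, at the beginning of phase $i$ the residual hypergraph is a disjoint union of vertex-induced subhypergraphs each known to be $k_{i-1}$-edge-connected. In phase $i$ I apply the $O(p \log n)$-time sparsifier of \cite{ChekuriX17} to each such piece to compute a $k_i$-connectivity certificate, then extract the maximal $k_i$-edge-connected vertex-induced sub-pieces; edges that lie in some $k_{i-1}$-strong piece but not in any $k_i$-strong sub-piece are ``peeled'' and assigned $\strength'(e) := k_{i-1}$ (or a comparable constant fraction of $k_i$).

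For the lower-bound property, any edge $e$ peeled in phase $i$ lies in a $k_{i-1}$-edge-connected vertex-induced subhypergraph carried over from phase $i-1$, so by the definition of strength $\strength_H(e) \geq k_{i-1} = \strength'(e)$. For the $c$-cost property, I invoke a hypergraph analog of Nash-Williams' inequality asserting $\sum_{e \in E} 1/\strength_H(e) \leq O(r)(n-1)$ for a rank-$r$ hypergraph, proved by a charging argument that accounts for each hyperedge contributing to up to $r-1$ levels of the strength decomposition. Because the thresholds double, $\strength'(e)$ is within a constant factor of $\strength_H(e)$, so $\sum_e 1/\strength'(e) \leq O(r)(n-1)$, yielding $c = O(r)$.

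For the running time, each phase costs $O(p \log n)$ using \cite{ChekuriX17}, and there are $O(\log n)$ phases since unweighted strengths lie in $[1,n]$, giving $O(p \log^2 n)$. Weighted hypergraphs are handled by a standard weight-bucketing trick that partitions edges into $O(\log p)$ scales and adds the extra $\log p$ factor. The main obstacle I anticipate is verifying that the sparsifier of \cite{ChekuriX17} supports not merely certifying $k$-edge-connectivity of a single hypergraph but actually extracting the $k$-strong vertex decomposition of an induced subhypergraph in near-linear time; this is the structural primitive that makes the iterative peeling efficient, and it is also where the rank dependence in the $O(r)$ cost bound naturally arises.
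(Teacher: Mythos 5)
Your high-level skeleton (doubling thresholds, peeling weak edges and assigning the previous threshold as the estimate, using the sparsifier of \cite{ChekuriX17} as the workhorse, and weight-bucketing for the weighted case) matches the paper's, and your lower-bound argument is essentially the paper's \autoref{lem:ei}. But the obstacle you flag at the end is a genuine gap, and the paper resolves it in a way that invalidates your cost analysis as written. No near-linear-time procedure is available that extracts the exact maximal $k$-edge-connected induced sub-pieces; the paper's \textsc{WeakEdges}$(H,k)$ only returns a \emph{superset} of the $k$-weak edges, so edges of arbitrarily large true strength can be peeled at level $i$ and receive the estimate $2^{i-1}k$. Consequently your claim that ``$\strength'(e)$ is within a constant factor of $\strength_H(e)$'' fails for any implementable version of the algorithm, and the cost bound cannot be obtained by combining a Nash--Williams-type inequality for the true strengths with pointwise closeness of $\strength'$ to $\strength$. (Incidentally, the structural inequality here is $\sum_e 1/\strength_H(e)\le n-\component(H)$ with \emph{no} factor of $r$, see \autoref{lem:strength1cost}; the $O(r)$ in the theorem does not originate where you place it.)

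The paper's substitute is a \emph{lightness} accounting. \textsc{WeakEdges}$(H,k)$ is built from $O(\log n)$ rounds of a \textsc{Partition} subroutine (itself built from the $k$-sparse certificate of \cite{ChekuriX17} plus contraction and recursion), and is shown to output a $4rk$-light set: at most $4rk$ edges per new connected component its removal creates. The rank enters exactly here: in a component with no $k$-strong sub-component the number of edges is at most $k(n-1)$ by \autoref{cor:weakedgebound}, hence the total degree is at most $rk(n-1)$, so at least half the vertices have degree below $2rk$ and become isolated once all $2rk$-crisp edges are removed; iterating $\log n$ times empties the component. The $c$-cost property then follows by summing $|F_i|/(2^{i-1}k) \le 8r\,(\component(H_i)-\component(H_{i-1}))$ and telescoping over components to get $8r(n-1)$, with no reference to the true strengths of the peeled edges. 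Without this lightness argument your proposal has no valid bound on $\sum_e 1/\strength'(e)$. A further small point: the number of phases in the unweighted case is governed by $\log p$ rather than $\log n$, since strengths can be as large as the number of edges.
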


The function that satisfies the theorem is called a $c$-approximate strength. 
One natural approach is converting the hypergraph to a graph,
and hope the strength approximately carries over. 
For example, replacing each hyperedge with a clique, or a star spanning the vertices in the edge. 
The minimum strength of the replaced edges might give us a $O(c)$-approximation.
Unfortunately, this will not work even when rank is only $3$. 
Consider the following hypergraph $H$ with vertices $\set{v_1,\ldots,v_n}$. 
There is an edge $e=\set{v_1,v_2}$, and edge $\set{v_1,v_2,v_i}$ for all $3\leq i\leq n$. 
The strength of $e$ in $H$ is $1$. 
Let $G$ be a graph where each $\set{v_1,v_2,v_i}$ in $H$ is replaced with a star centered 
at $v_1$ and spans $\set{v_1,v_2,v_i}$. The strength of $e$ in $G$ is $n-1$. 
This bound also holds if each hyperedge $\set{v_1,v_2,v_i}$ is replaced by a clique. 

Our proof of the preceding theorem closely follows the corresponding
theorem for graphs from \cite{BenczurK15}. A key technical tool for graphs
is the deterministic sparsification algorithm for edge-connectivity 
due to Nagamochi and Ibaraki \cite{Nagamochi1992}. Here we rely on
a generalization to hypergraphs from our recent work \cite{ChekuriX17}.

\subsection{Applications}
A weighted hypergraph $H'=(V,E')$ is a $(1\pm \e)$-cut approximation
of a weighted hypergraph $H=(V,E)$, if for every $S\subset V$, 
the cut value of $S$ in $H'$ is
within $(1\pm \e)$ factor of the cut value of $S$ in $H$. 
Below we state formally the sampling theorem from \cite{KoganK15}.

\begin{theorem}[\cite{KoganK15}]
\label{thm:randomsamplebound}
Let $H$ be a rank $r$ weighted hypergraph where edge $e$ has weight
$w(e)$. Let $H_\e$ be a hypergraph obtained by independently sampling
each edge $e$ in $H$ with probability $p_e =
\min \paren{\frac{3((d+2)\ln n + r)}{\strength_H(e)\e^2},1}$ and weight
$w(e)/p_e$ if sampled. With probability at least $1-O(n^{-d})$, the
hypergraph $H_\e$ has $O(n(r+\log n)/\e^2)$ edges and is a $(1\pm
\e)$-cut-approximation of $H$.
\end{theorem}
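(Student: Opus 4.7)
The plan is to adapt the Bencz\'{u}r--Karger cut-sparsification proof from graphs to hypergraphs. The argument splits into two claims: a size bound on $H_\e$, and a concentration bound that preserves every cut value to within a $(1\pm\e)$ factor. Both must hold simultaneously with probability $1-O(n^{-d})$.

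For the size bound, the expected number of sampled edges is at most $(3((d+2)\ln n + r)/\e^2)\sum_e 1/\strength_H(e)$. The combinatorial core is the hypergraph extension of the Bencz\'{u}r--Karger identity $\sum_e 1/\strength_H(e) = O(n)$. I would prove it by iteratively peeling off a vertex-induced subhypergraph $U$ realizing the maximum $\lambda(H[U]) = k$, observing that every hyperedge lying inside $U$ has $\strength_H(e)\ge k$, charging $1/\strength_H(e)$ against the vertex reduction, and recursing on the complement --- exactly analogous to the graph case. A standard Chernoff bound then upgrades this expectation into a high-probability count of $O(n(r+\log n)/\e^2)$.

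For concentration, fix a cut $(S,V\setminus S)$ of weight $w$. The random cut weight is $W_S = \sum_{e\in\delta_H(S)} X_e\, w(e)/p_e$, which has mean $w$ by construction. Each contribution is bounded by $w(e)/p_e \le \strength_H(e)\e^2/(3((d+2)\ln n + r))$, and in turn by $w\e^2/(3((d+2)\ln n + r))$, because $\strength_H(e)\le w$ for every $e\in\delta_H(S)$: any induced subhypergraph $H[U]$ containing $e$ inherits the cut $S\cap U$, which has weight at most $w(\delta_H(S)) = w$. A multiplicative Chernoff bound then yields $\Pr[|W_S - w|>\e w] \le 2\exp(-((d+2)\ln n + r))$, i.e.\ a per-cut failure probability of roughly $n^{-(d+2)}e^{-r}$.

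The main obstacle is upgrading this per-cut bound to a simultaneous bound over the doubly-exponentially many cuts. Following Karger, I would bucket edges by strength into geometric levels $F_k = \{e:\strength_H(e)\ge k\}$ for $k=2^i$; restricted to $F_k$, any cut of $H$ induces a cut in a hypergraph whose mincut is at least $k$. The hypergraph cut-counting theorem bounds the number of cuts of value at most $\alpha\lambda$ in a rank-$r$ hypergraph of mincut $\lambda$ by $n^{2\alpha}$ times an $r$-dependent overhead, and this is precisely where the extra $r$ term in $p_e$ is spent: the $e^{-r}$ slack in the single-cut bound absorbs the hypergraph-specific overhead that distinguishes hyperedge cut-counting from the graph case. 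Summing failure probabilities over the $O(\log n)$ strength levels, over cut-value buckets within each level, and over cuts within each bucket, yields an overall failure probability of $O(n^{-d})$; combined with the size bound, this completes the proof.
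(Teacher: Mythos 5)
The paper does not actually prove this theorem: it is quoted verbatim from Kogan and Krauthgamer \cite{KoganK15}, and the note's own contribution is only the fast computation of the approximate strengths that feed into the sampling. So there is no in-paper proof to compare against; I can only assess your sketch against the known argument from \cite{KoganK15} and \cite{BenczurK15}.

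Your architecture is the standard and correct one: bound the expected size via the charging identity $\sum_e 1/\strength_H(e)=O(n)$ (this is exactly the paper's \autoref{lem:strength1cost}), get per-cut concentration using $\strength_H(e)\le w$ for every $e\in\delta_H(S)$ (the paper's \autoref{lem:strengthboundbycut}), and union-bound over cuts via strength levels and cut counting. Two concrete gaps remain, and they are the load-bearing steps. First, the ingredient you describe as ``an $r$-dependent overhead'' in the cut-counting theorem is not an available black box: the claim that a rank-$r$ hypergraph with mincut $\lambda$ has at most roughly $2^{O(\alpha r)}n^{2\alpha}$ cuts of weight at most $\alpha\lambda$ is the central technical contribution of \cite{KoganK15} --- Karger's $n^{2\alpha}$ bound does not transfer to hypergraphs, since the contraction argument behaves differently when a randomly chosen hyperedge can shrink without disappearing. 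Invoking this without proof leaves the hardest part of the theorem unproved. Second, your per-cut Chernoff bound as written is too weak to survive the union bound: bounding each term by $w\e^2/(3((d+2)\ln n+r))$ via $\strength_H(e)\le w$ yields a failure probability of about $n^{-(d+2)}e^{-r}$ for \emph{every} cut, independent of its value, whereas there are up to $2^{O(\alpha r)}n^{2\alpha}$ cuts of value $\alpha\lambda$ to control at each strength level. The standard fix --- which your bucketing paragraph gestures at but does not carry out --- is to bound each term by the strength level $k$ times $\e^2/(3((d+2)\ln n+r))$, so that a cut of value $\alpha k$ fails with probability $\exp(-\Omega(\alpha((d+2)\ln n+r)))$, which does dominate the cut count. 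As it stands, the proposal is a correct outline with the two decisive steps missing.
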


The expected weight of each edge in $H_\e$ is the weight of the edge
in $H$. It is not difficult to prove that $c$-approximate strength
also suffices to get a $(1\pm \e)$-cut-approximation.  That is, if we
replace $\strength$ with $c$-approximate strength $\strength'$, the
sampling algorithm will still output a $(1\pm \e)$-cut-approximation
of $H$.  Indeed, the lower bound property shows each edge will be
sampled with a higher probability, therefore the probability of
obtaining a $(1 \pm \eps)$-sparsifier increases. However, the number
of edges in the sparsifier increases. The $c$-cost property shows the
hypergraph $H_\e$ will have $O(cn(r+\log n)/\e^2)$ edges.

From \autoref{thm:main}, we can find an $O(r)$-approximate strength 
function $\strength'$ in $O(p\log^2 n\log p)$ time.
\begin{corollary}
  \label{cor:main}
  A $(1\pm \e)$-cut approximation of $H$ with $O(nr(r+\log n)/\e^2)$
  edges can be found in $O(p\log^2 n\log p)$ time with high
  probability.
\end{corollary}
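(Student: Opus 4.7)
The plan is to combine \autoref{thm:main} with \autoref{thm:randomsamplebound}, simply substituting the computed approximate strength $\strength'$ for the true strength $\strength_H$ in the sampling probabilities. First I would invoke the algorithm from \autoref{thm:main} to produce, in $O(p\log p\log^2 n)$ time, a function $\strength'$ that is an $O(r)$-approximate strength, i.e.\ it satisfies the lower bound property $\strength'(e) \leq \strength_H(e)$ and the $c$-cost property $\sum_{e \in E} 1/\strength'(e) \leq c(n-1)$ with $c = O(r)$. Then I would independently sample each edge $e$ with probability $p_e = \min\!\left(\frac{3((d+2)\ln n + r)}{\strength'(e)\e^2},\, 1\right)$ and give it weight $w(e)/p_e$ if kept, for a sufficiently large constant $d$.

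For correctness, the lower bound property implies $p_e$ is at least as large as the probability used in the Kogan--Krauthgamer scheme, so a direct inspection of their Chernoff-based proof of \autoref{thm:randomsamplebound} still yields a $(1\pm\e)$-cut-approximation with probability $1-O(n^{-d})$; this is precisely the monotonicity observation noted in the paragraph preceding the corollary, and is the only part of the argument that is not immediate — one must confirm that inflating the probabilities does not break the concentration estimate, but since the analysis of each cut uses a Chernoff bound where larger sampling probabilities only concentrate better, this is routine.

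For the edge count, the $c$-cost property yields
\[
\mathbb{E}\!\left[|E(H_\e)|\right] \;=\; \sum_{e \in E} p_e \;\leq\; \frac{3((d+2)\ln n + r)}{\e^2} \sum_{e \in E} \frac{1}{\strength'(e)} \;\leq\; \frac{3((d+2)\ln n + r)\, c(n-1)}{\e^2} \;=\; O\!\left(\frac{rn(r + \log n)}{\e^2}\right),
\]
since $c = O(r)$. A standard Chernoff bound (the expectation is $\Omega(\log n)$) upgrades this to a high-probability upper bound of the same order on the actual number of sampled edges.

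Finally, the running time is dominated by the computation of $\strength'$, which takes $O(p\log p\log^2 n)$ time by \autoref{thm:main}; the sampling step itself is $O(m) = O(p)$. Taking the union bound over the failure events for correctness of the cut approximation and for the edge-count concentration gives the stated high-probability guarantee, completing the proof. The only real obstacle is checking that the analysis of \cite{KoganK15} genuinely tolerates the replacement of $\strength_H$ by any lower bound $\strength'$; everything else is a direct substitution.
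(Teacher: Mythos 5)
Your proposal is correct and follows essentially the same route as the paper: compute the $O(r)$-approximate strength via \autoref{thm:main}, sample with the inflated probabilities, use the lower bound property for correctness and the $c$-cost property for the $O(rn(r+\log n)/\e^2)$ edge bound, with the running time dominated by the strength computation. The paper's own justification is exactly this (stated more tersely in the paragraph preceding the corollary), so there is nothing to add.
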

The number of edges in the $(1\pm \e)$-cut approximation is worse than
\autoref{thm:randomsamplebound} by a factor of $r$. It is an
open problem whether the extra factor of $r$ can be removed.

As is the case for graphs, cut sparsification allows for faster
approximation algorithms for cut problems. We mention two below.

\paragraph{Mincut:} The best running time known currently to compute a
(global) mincut in a hypergraph is $O(pn)$ \cite{Klimmek1996,MakW00}. A
$(2+\eps)$-approximation can be computed in $\tilde{O}(p/\eps)$ time
\cite{ChekuriX17}.  Via \autoref{cor:main} we can first sparsify the
hypergraph and then apply the $O(pn)$ time algorithm to the sparsified
graph.  This gives a randomized algorithm that outpus a
$(1+\eps)$-approximate mincut in a rank $r$ hypergraph in $O(n^2r^2(r+
\log n)/\eps^2 + p\log^2 n \log p)$ time and works with high
probability.  For small $r$ the running time is $\tilde{O}(p+n^2/\eps^2)$ for
a $(1+\eps)$-approximation.

\paragraph{$s$-$t$ mincut:} The standard technique to compute a
$s$-$t$ mincut in a hypergraph is computing a $s$-$t$ maximum flow in
an associated capacitated digraph with $O(n+m)$ vertices and $O(p)$
edges \cite{Lawler73}. A $(1-\eps)$ approximation algorithm of $s$-$t$
maximum flow in such graph can be found in $\tilde{O}(p\sqrt{n+m})$
time \cite{LeeS14}.  Via sparsification \autoref{cor:main},
we can obtain a randomized algorithm to find a
$(1+\eps)$-approximate $s$-$t$ mincut in a rank $r$ hypergraph in
$\tilde{O}(n^{3/2}r^2(r+\log n)^{3/2}/\eps^3+p)$ time. For
small $r$ the running time is $\tilde{O}(p+n^{3/2}/\eps^3)$ for a
$(1+\eps)$-approximation.

\section{Preliminaries}
A \EMPH{$k$-strong component} of $H$ is a inclusion-wise maximal set
of vertices $U\subset V$ such that $H[U]$ is $k$-edge-connected.
%A set of vertices is called a \EMPH{strong component} if it is a $k$-strong component for some $k$.
An edge $e$ is \EMPH{$k$-strong} if $\strength(e) \ge k$, otherwise
$e$ is \EMPH{$k$-weak}.  $\component(H)$ is the number of components
of a hypergraph $H$. The \EMPH{size} of $H$ is $\sum_{e\in E}
|e|$. The \EMPH{degree} of a vertex $v$, $\deg(v)$, is the number of
edges incident to $v$.  $n$ denotes the number of vertices in $H$, $p$
denotes the size of $H$ and $r$ denotes the rank of $H$.

Contracting an edge $e$ in a hypergraph $H=(V,E)$ results in a new
hypergraph $H'=(V',E')$  where all vertices in $e$ are identified into
a single new vertex $v_e$. Any edge $e' \subseteq e$
is removed. Any edge $e'$ that properly intersects with $e$ is adjusted by
replacing $e' \cap e$ by the new vertex $v_e$. We note $H'$ by
$H/e$.

\begin{proposition}
\label{pro:fundamental}
Deleting an edge does not increase the strength of any remaining edge. 
Contracting an edge does not decrease the strength of any remaining edge.
\end{proposition}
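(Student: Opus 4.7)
For the deletion statement, the plan is to show that the \emph{same} witness set $U$ that certifies the strength of a remaining edge $f$ in $H\setminus e$ is also an admissible witness in $H$, and that its induced connectivity can only be smaller after removing $e$. Concretely, pick $U$ with $f\subset U$ achieving $\strength_{H\setminus e}(f)=\lambda((H\setminus e)[U])$. Note that $(H\setminus e)[U] = H[U]\setminus e$ (if $e\not\subset U$ nothing changes, if $e\subset U$ we literally drop $e$ from the induced subhypergraph). Since removing an edge cannot increase the value of any cut, we get $\lambda(H[U]\setminus e)\le \lambda(H[U])$, and since $U$ is also a feasible witness for $f$ in $H$, this last quantity is $\le \strength_H(f)$. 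Chaining these inequalities finishes the deletion half.

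For the contraction statement, the plan is the reverse: start with a witness $U$ for $\strength_H(f)$ and transport it to $H/e$ by identifying vertices. Set $U':= (U\setminus e)\cup\{v_e\}$ if $e\cap U\ne\emptyset$ and $U':=U$ otherwise. A quick verification shows that the image $f'$ of $f$ in $H/e$ satisfies $f'\subset U'$, regardless of how $f$ and $e$ intersect, so $U'$ is admissible. The main content is then $\lambda((H/e)[U'])\ge\lambda(H[U])$. The plan is to factor this through the intermediate hypergraph $H[U]/(U\cap e)$ obtained by identifying the vertices $U\cap e$ inside $H[U]$: first, identifying a set of vertices can only increase the mincut because the resulting minimum is taken over the restricted family of cuts of $H[U]$ that do not separate $U\cap e$, giving $\lambda(H[U]/(U\cap e))\ge \lambda(H[U])$; second, $(H/e)[U']$ contains $H[U]/(U\cap e)$ as a spanning subhypergraph, because any edge of $H[U]$ (contracted by $e$) satisfies the containment requirement for $U'$, while $(H/e)[U']$ may in addition inherit edges of $H$ that merely straddle $U$ and $e\setminus U$ and become contained in $U'$ only after contraction; adding edges cannot decrease the mincut, so $\lambda((H/e)[U'])\ge \lambda(H[U]/(U\cap e))$. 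Composing the two inequalities gives $\strength_{H/e}(f)\ge \lambda((H/e)[U'])\ge\strength_H(f)$.

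The step I expect to require the most care is the third case of the contraction argument, where $e$ properly straddles $U$ (i.e.\ $e\cap U\ne\emptyset$ and $e\not\subset U$). In this case $e$ is not itself an edge of $H[U]$, so the naive ``contracting an edge cannot decrease the mincut'' lemma does not directly apply; instead one has to invoke the vertex-identification version and then account for the extra edges of $H/e$ that land inside $U'$ only because contraction shrinks them. A short bookkeeping argument classifying edges $g$ of $H$ by whether $g\cap e=\emptyset$, $g\subset e$, or $g$ properly meets $e$, and then checking containment in $U'$ in each case, handles the step cleanly; the two monotonicity facts (identifying vertices raises the mincut, adding edges raises the mincut) do the rest.
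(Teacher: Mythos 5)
The paper states \autoref{pro:fundamental} without proof, treating it as a folklore fact, so there is no in-paper argument to compare against; your proposal supplies a correct and complete justification. The deletion half is the standard witness-reuse argument, and the contraction half correctly isolates the only delicate case --- $e$ properly straddling the witness set $U$ --- by factoring through the vertex identification $H[U]/(U\cap e)$ (which restricts the family of cuts and hence cannot lower the mincut) and then noting that $(H/e)[U']$ only gains edges relative to that intermediate hypergraph. Both monotonicity facts you rely on are valid for hypergraph cuts as defined here, and your edge classification ($g\cap e=\emptyset$, $g\subseteq e$, $g$ properly meeting $e$) matches the paper's definition of contraction, so no gaps remain.
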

\begin{lemma}
\label{lem:strengthboundbycut}
Let $H=(V,E)$ be a hypergraph. If an edge $e\in E$ crosses a cut of
value $k$, then $\strength_H(e)\leq k$.
\end{lemma}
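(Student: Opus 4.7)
The plan is to directly exhibit, for every candidate vertex-induced subhypergraph in the definition of strength, a small cut in that subhypergraph. Let $A \subsetneq V$ be a nontrivial cut of $H$ crossed by $e$ whose weight is $k$. For an arbitrary $U$ with $e \subseteq U \subseteq V$, I would show $\lambda(H[U]) \leq k$ by taking the induced partition $(A \cap U,\ (V \setminus A) \cap U)$ of $U$ as the witness cut of $H[U]$.

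First I would verify that this is a nontrivial cut of $H[U]$: since $e$ crosses $A$ in $H$ we have $e \cap A \neq \emptyset$ and $e \cap (V \setminus A) \neq \emptyset$, and because $e \subseteq U$, both $A \cap U$ and $(V \setminus A) \cap U$ are nonempty. Next I would bound its value. Every edge $e' \in E$ lying in $\delta_{H[U]}(A \cap U)$ satisfies $e' \subseteq U$, $e' \cap A \neq \emptyset$, and $e' \cap (V \setminus A) \neq \emptyset$, so $e' \in \delta_H(A)$. Hence $\delta_{H[U]}(A \cap U) \subseteq \delta_H(A)$, and the weight of this cut in $H[U]$ is at most $k$. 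This yields $\lambda(H[U]) \leq k$ for every $U \supseteq e$, and taking the maximum over such $U$ in the definition of strength gives $\strength_H(e) \leq k$.

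There is essentially no obstacle here; the statement falls out once one writes down the natural induced partition and unwinds definitions. The only point to be mindful of is that restricting $H$ to the subhypergraph $H[U]$ discards hyperedges not contained in $U$, but this can only decrease the value of the induced cut, which is precisely the direction needed for an upper bound on $\lambda(H[U])$. The argument goes through verbatim for both unweighted and weighted hypergraphs.
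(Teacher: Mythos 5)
Your proposal is correct and follows essentially the same route as the paper: restrict the given cut $A$ to $U$, observe that $\delta_{H[U]}(A\cap U)\subseteq \delta_H(A)$, and conclude $\lambda(H[U])\le k$ for every $U\supseteq e$. You simply spell out the nontriviality of the induced cut and the containment of edge sets, which the paper leaves as "easy to see."
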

\begin{proof}
  Suppose $S$ is a cut such that $|\delta_H(S)| \le k$ and $e \in \delta_H(S)$. 
  Consider any $U\subset V$ that contains $e$ as a subset and let
  $H' = H[U]$.  It is easy to see that $|\delta_{H'}(S \cap U)| \le k$
  and hence $\lambda(H') \le k$. Therefore, $\strength(e) =
  \max_{e\subset U\subset V} \lambda(H[U]) \leq k$.
\end{proof}

\begin{lemma}[Extends Lemma 4.5,4.6 \cite{BenczurK15}]
\label{lem:deletecontract}
Let $e,e'$ be distinct edges in a hypergraph $H=(V,E)$.
\begin{enumerate}
\item If $\strength_H(e)\geq k$ and $e'$ is a $k$-weak edge
then $\strength_H(e) = \strength_{H'}(e)$ where $H' = H \setminus e'$.
\item Suppose $\strength_H(e)< k$ (that is $e$ is a $k$-weak edge) and
  $e'$ is a $k$-strong edge. Let $H' = H/e'$ obtained by contracting $e'$.
  If $f$ is the corresponding edge of $e$ in $H'$ then
  $\strength_H(e) = \strength_{H'}(f)$.
\end{enumerate}
In short, contracting a $k$-strong edge does not increase the strength of a
$k$-weak edge and deleting a $k$-weak edge does not decrease the strength of
a $k$-strong edge.
\end{lemma}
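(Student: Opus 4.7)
The plan is to prove each direction separately. Proposition \ref{pro:fundamental} already supplies the easy inequality in both parts (deleting an edge cannot raise the strength of a remaining edge, contracting one cannot lower it), so in each case the real work is the reverse inequality, which I would establish by lifting a witness back from the modified hypergraph.

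For Part 1, fix a witness $W$ for $\strength_H(e)$, that is, a set with $e\subseteq W\subseteq V$ and $\lambda(H[W])=\strength_H(e)\ge k$. The observation to make is $e'\not\subseteq W$, for otherwise $W$ itself would certify $\strength_H(e')\ge\lambda(H[W])\ge k$, contradicting that $e'$ is $k$-weak. Since $e'\not\subseteq W$, the edge $e'$ is not present in $H[W]$, so removing $e'$ from $H$ leaves $H[W]$ unchanged, giving $\strength_{H\setminus e'}(e)\ge \lambda((H\setminus e')[W])=\lambda(H[W])=\strength_H(e)$.

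For Part 2, let $s':=\strength_{H/e'}(f)$ with witness $U'$ in $H'=H/e'$, and let $W$ be a witness for $e'$ being $k$-strong, so $e'\subseteq W$ and $\lambda(H[W])\ge k$. If $v_{e'}\notin U'$, then $f=e$ and $H'[U']=H[U']$, so $\strength_H(e)\ge s'$ immediately. Otherwise define the lift $U_0:=(U'\setminus\{v_{e'}\})\cup e'$ and take $U:=U_0\cup W$; a direct check shows $e,e'\subseteq U$. Using the identification $H[U]/e'=H'[U/e']$, and noting that $U/e'$ contains both $U'$ (on which $H'$ is $s'$-edge-connected by choice) and $W/e'$ (on which $H'$ is at least $k$-edge-connected, since contraction cannot lower the mincut), and that these two sets share $v_{e'}$, an uncrossing argument yields $\lambda(H[U]/e')\ge\min(s',k)$. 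Hence every cut of $H[U]$ that does not split $e'$ has value at least $\min(s',k)$. For a cut $S$ of $H[U]$ that does split $e'$: since $e'\subseteq W\subseteq U$, the set $S\cap W$ is a proper cut of $W$ and its crossing edges inject into $\delta_{H[U]}(S)$, so the cut value is again at least $k$. Therefore $\lambda(H[U])\ge\min(s',k)$ and $\strength_H(e)\ge\min(s',k)$; since $s'>k$ would contradict $e$ being $k$-weak, we conclude $s'\le\strength_H(e)$.

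The one piece of genuine content behind these lifts---which I expect to be the main obstacle to write cleanly---is the uncrossing claim used in Part 2: if $H[A]$ is $s$-edge-connected, $H[B]$ is $t$-edge-connected, and $A\cap B\neq\emptyset$, then $H[A\cup B]$ is $\min(s,t)$-edge-connected. I would prove this by fixing $v\in A\cap B$, normalizing any proper cut $S$ of $A\cup B$ so that $v\in S$ (swapping with the complement otherwise), and casing on whether $A\setminus S$ and $B\setminus S$ are empty; in every nondegenerate case one of $A\cap S$, $B\cap S$ is a proper cut of $A$ or of $B$ whose crossing edges embed in $\delta_{H[A\cup B]}(S)$, while the case $A\setminus S=B\setminus S=\emptyset$ forces $A\cup B\subseteq S$, contradicting $S\subsetneq A\cup B$.
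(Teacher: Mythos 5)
Your proposal is correct and follows essentially the same route as the paper: the easy inequalities come from Proposition~\ref{pro:fundamental}, Part~1 is the identical witness argument ($e'\not\subseteq W$ since $e'$ is $k$-weak), and Part~2 lifts $U'$ back to $H$, unions it with a strength witness $W$ for $e'$, bounds the connectivity of the union by $\min(s',k)$, and uses the $k$-weakness of $e$ to discard the $k$ branch. The only differences are presentational: you handle the cuts not splitting $e'$ via the contracted hypergraph together with an explicitly proved uncrossing lemma (which the paper invokes elsewhere as a ``triangle inequality for connectivity'' without proof), and your concluding step is arguably cleaner than the paper's.
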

\begin{proof} We prove the two claims separately.

  \subparagraph{Deleting a $k$-weak edge:} Since $e$ is $k$-strong in $H$
  there is $U \subseteq V$ such that $\lambda(H[U]) \ge k$ and
  $e \subseteq U$. If $e' \subseteq U$, $e'$ would be a $k$-strong
  edge. Since $e'$ is $k$-weak, $e'$ does not belong to $H[U]$. 
  Thus $H'[U] = H[U]$ which implies that $\strength_{H'}(e) \ge k$.

  \subparagraph{Contracting a $k$-strong edge:} Let $H' = H/e'$ where
  $e'$ is a $k$-strong edge in $H$. Let $v_{e'}$ be the vertex in $H'$
  obtained by contracting $e'$. Let $U' \subseteq V(H')$ be the set
  that certifies the strength of $f$, that is $\strength_{H'}(f) =
  \lambda(H'[U'])$ and $f \subseteq U'$. If $v_{e'} \not \in U'$ then
  it is easy to see that $U' \subseteq V(H)$ and $H[U] = H'[U']$ which
  would imply that $\strength_{H'}(f) = \strength_H(e)$. Thus, we can
  assume that $v_{e'} \in U'$. Let $U$ be the set of vertices
  in $V(H)$ obtained by uncontracting $v_{e'}$. We claim that
  $\lambda(H[U]) \ge \lambda(H'[U'])$. If this is the case we would
  have $\strength_H(e) \ge \lambda(H[U]) \ge \lambda(H'[U']) = \strength_{H'}(f)$.  We now
  prove the claim. Let $W$ be the set that certifies the strength of $e'$, namely
  $\strength_{H}(e') = \lambda(H[W])$. Consider any cut $S\subset U\cup W$ in
  $H[U\cup W]$. If $S$ crosses $W$ or strictly contained in $W$,
  then $S$ has cut value at least $\lambda(H[W])$. Otherwise, $S\subset U\setminus W$ or 
  $W\subset S$. By symmetry, we only consider $S\subset U\setminus W$. 
  $S\neq U$ because $e'\subset U\cap W$. $S$ is also a 
  cut in $H'[U']$, and $|\delta_{H[U]}(S)| = |\delta_{H'[U']}(S)|$. 
  This shows that 
  \[
    \lambda(H[U\cup W]) \geq \min(\lambda(H'[U']),\lambda(H[W])) 
  \]
  Because $\lambda(H[W])>\lambda(H[U])$, and $\lambda(H[U])\geq \lambda(H[U\cup W])$,
  we arrive at $\lambda(H[U])\geq \lambda(H'[U'])$.
%  Let $\lambda(H[U]) = k_1$ and $\lambda_{H'}(U') =
%  k_2 \ge k_1$. Let $S$ be a mincut in $H[U]$ of value $k_1$ such that
%  $S \cap e' = \emptyset$ (or via symmetry if $e' \subseteq S$) then $S$
%  would also be a cut of value $k_1$ in $H'[U']$ and hence $k_2 =
%  k_1$. Otherwise, every mincut in $H[U]$ have to properly cross $A$. 
%  Let $S$ properly crosses $e'$ then, consider any cut in by \autoref{prop:basic},
%  $\strength_{H'[U']}(e')$. {\bf Chandra: to finish}.
 \end{proof}

\begin{theorem}[Extends Lemma 4.10 \cite{BenczurK15}]
\label{thm:cutval1}
Consider a connected unweighted hypergraph $H=(V,E)$.  
A weighted hypergraph $H'=(V,E)$
with weight $\cost_H(e)$ on each edge $e$ has minimum cut value $1$.
\end{theorem}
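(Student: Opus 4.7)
The plan is to prove two matching bounds: every non-trivial cut of $H'$ has weight at least $1$, and at least one specific cut has weight exactly $1$. Both directions follow directly from \autoref{lem:strengthboundbycut}, with no need for the contraction/deletion induction used in the original Benczúr--Karger argument.

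For the lower bound, I would fix an arbitrary non-trivial cut $S$ of $H'$. Since $H$ and $H'$ share the same edge set, $\delta_{H'}(S) = \delta_H(S)$. Every $e \in \delta_H(S)$ crosses an (unweighted) cut of value $|\delta_H(S)|$, so \autoref{lem:strengthboundbycut} gives $\strength_H(e) \le |\delta_H(S)|$, equivalently $\cost_H(e) \ge 1/|\delta_H(S)|$. Summing the $|\delta_H(S)|$ copies of this inequality,
\[
\sum_{e \in \delta_H(S)} \cost_H(e) \;\ge\; |\delta_H(S)| \cdot \frac{1}{|\delta_H(S)|} \;=\; 1,
\]
so the minimum cut value in $H'$ is at least $1$.

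For the matching upper bound, I would take $S^*$ to be a minimum cut of $H$, so that $|\delta_H(S^*)| = \lambda(H)$. Each edge $e \in \delta_H(S^*)$ satisfies $\strength_H(e) \le \lambda(H)$ by \autoref{lem:strengthboundbycut}, and $\strength_H(e) \ge \lambda(H[V]) = \lambda(H)$ by choosing $U = V$ in the definition of strength (valid since $H$ is connected, so $\lambda(H)$ is a well-defined positive integer). Hence $\strength_H(e) = \lambda(H)$ for every such edge, and the weight of $\delta_{H'}(S^*)$ is $\lambda(H)/\lambda(H) = 1$.

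The only thing that could be considered an obstacle is making sure the definition of $\strength_H(e)$ really does permit $U = V$, so that the baseline inequality $\strength_H(e) \ge \lambda(H)$ is available; without this, the upper bound fails. Once that is in place, the theorem is essentially a two-line consequence of \autoref{lem:strengthboundbycut}, and the passage from graphs to hypergraphs is automatic.
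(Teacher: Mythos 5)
Your proof is correct and follows essentially the same route as the paper's: the lower bound via \autoref{lem:strengthboundbycut} applied to an arbitrary cut, and the upper bound by observing that every edge crossing a minimum cut of $H$ has strength exactly $\lambda(H)$ (a step the paper dismisses as ``easy to see'' and you justify explicitly by taking $U=V$ in the definition of strength). Your reading that $e\subset U\subset V$ permits $U=V$ is the intended one, so the argument goes through.
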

\begin{proof}
  Consider a cut $S$ of value $k$ in $H$; that is $|\delta_H(S)|= k$.
  For each edge $e\in \delta(S)$, $\strength_H(e) \leq k$ by
  \autoref{lem:strengthboundbycut}. Therefore $e$ has weight at least
  $1/k$ in $H'$. It follows the cut value of $S$ in $H'$ is
  at least $k \cdot 1/k \ge 1$. Thus, the mincut of $H'$ is at least $1$.

  Let $S$ be a mincut in $H$ whose value is $k^*$. It is easy to see
  that for each $e \in \delta_H(S)$, $\strength_H(e) = k^*$.
  Thus the value of the cut $S$ in $H'$ is exactly $k^* \cdot 1/k^* = 1$.
\end{proof}

\begin{lemma}[Extends Lemma 4.11 \cite{BenczurK15}]
\label{lem:strength1cost}
For a unweighted hypergraph $H=(V,E)$ with at least $1$ vertex,  
\[
\sum_{e\in E} \cost(e) \leq n-\component(H)
\]
\end{lemma}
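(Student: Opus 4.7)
The plan is to induct on $n=|V|$, generalizing the graph argument from \cite{BenczurK15}. The base $n=1$ is trivial: no edges, $\component(H)=1$, and both sides equal $0$. For the inductive step I would first dispose of the disconnected case: if $V$ splits into components $V_1,\ldots,V_c$ with $c\ge 2$, then for any edge $e\subseteq V_i$ the certifying set in $\strength_H(e)=\max_{e\subseteq U\subseteq V}\lambda(H[U])$ can be restricted to lie inside $V_i$, since any $U$ meeting a second component makes $H[U]$ disconnected and contributes $\lambda(H[U])=0$. Hence $\strength_H(e)=\strength_{H[V_i]}(e)$, and applying the induction hypothesis to each connected $H[V_i]$ gives $\sum_e\cost_H(e)\le\sum_i(|V_i|-1)=n-c=n-\component(H)$.

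So assume $H$ is connected and aim for $\sum_e\cost(e)\le n-1$. Pick any mincut $(A,V\setminus A)$ with $|\delta_H(A)|=k=\lambda(H)$, both sides nonempty. Every $e\in\delta_H(A)$ satisfies $\strength_H(e)\ge k$ (take $U=V$) and $\strength_H(e)\le k$ by \autoref{lem:strengthboundbycut}, so $\strength_H(e)=k$ and these $k$ mincut edges contribute exactly $k\cdot(1/k)=1$ to the cost sum. For any $e\in E[A]$, the inclusion $\{U:e\subseteq U\subseteq A\}\subseteq\{U:e\subseteq U\subseteq V\}$ immediately yields $\strength_H(e)\ge\strength_{H[A]}(e)$, and hence $\cost_H(e)\le\cost_{H[A]}(e)$; the same monotonicity holds on the other side. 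Applying the induction hypothesis to $H[A]$ and $H[V\setminus A]$ (both strictly smaller, with $\component\ge 1$) gives $\sum_{e\in E[A]}\cost_H(e)\le|A|-1$ and $\sum_{e\in E[V\setminus A]}\cost_H(e)\le|V\setminus A|-1$. Adding the three contributions gives $(|A|-1)+(|V\setminus A|-1)+1=n-1$, finishing the induction.

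I do not anticipate a serious obstacle. The two facts doing the work are (i) monotonicity of strength under enlarging the ambient vertex set, which is immediate from the definition, and (ii) that mincut edges achieve their strength exactly, which follows from \autoref{lem:strengthboundbycut} together with $U=V$. No hypergraph-specific subtlety (such as a hyperedge with vertices on both sides of $A$) intrudes, because the argument only needs $e\in\delta_H(A)$ to invoke the upper bound $\strength_H(e)\le k$, regardless of how the hyperedge splits across $A$.
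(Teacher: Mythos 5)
Your proof is correct and follows essentially the same route as the paper: identify a mincut of a connected piece, observe via \autoref{lem:strengthboundbycut} (plus $U=V$) that its edges have strength exactly the cut value and hence contribute exactly $1$ to the cost sum, and recurse using monotonicity of strength. The only cosmetic differences are that you induct on $n$ and recurse on the two induced sides $H[A]$, $H[V\setminus A]$, whereas the paper inducts on $n-\component(H)$ and deletes the cut edges from $H$ itself, invoking \autoref{pro:fundamental}; the substance is the same.
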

\begin{proof}
Let $t = n-\component(H)$. We prove the theorem by induction on $t$.

For the base case, if $t=0$, then $H$ has no edges and therefore the
sum is $0$.

Otherwise, let $t>0$. Let $U$ be a connected component of $H$ with at
least $2$ vertices. There exists a cut $X$ of cost $1$ in $H[U]$ by
\autoref{thm:cutval1}.

Let $H'=H-\delta(S)$.  $H'$ has at least one more connected component
than $H$.  By \autoref{pro:fundamental}, for each $e \in E(H')$
$\strength_{H}(e) \ge \strength_{H'}(e)$; hence $\cost_{H'}(e) \ge
\cost_H(e)$.  By the inductive hypothesis, $\sum_{e \in E(H')}
\cost_{H'}(e) \le (n - \component(H')) \le t-1$. The edges in $H$ are
exactly $\delta(X)\cup E(H')$. By the same argument as in the
preceding lemma, $\sum_{e \in \delta(X)} \cost_H(e) = 1$.  Therefore,
\begin{align*}
\sum_{e \in E(H)} \cost_H(e) & =  \sum_{e \in \delta(X)} \cost_H(e) + \sum_{e \in E(H')} \cost_H(e) \\
 & =  1 + \sum_{e \in E(H')} \cost_H(e) \\
 & \le  1 + \sum_{e \in E(H')} \cost_{H'}(e) \\
 & \le  1 + (t-1) = t.
\end{align*}
\end{proof}

\begin{corollary}
\label{cor:weakedgebound}
The number of $k$-weak edges in an unweighted hypergraph $H$ on $n$ vertices is at most $k(n-\component(H))$. 
\end{corollary}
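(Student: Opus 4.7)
The plan is to derive the bound immediately from \autoref{lem:strength1cost} by combining it with a pointwise lower bound on the cost of every $k$-weak edge. By definition, a $k$-weak edge $e$ satisfies $\strength(e) < k$, so
\[
\cost(e) \;=\; \frac{1}{\strength(e)} \;>\; \frac{1}{k}.
\]
Let $E_k \subseteq E$ denote the set of $k$-weak edges. Summing the inequality $\cost(e) \ge 1/k$ over $e \in E_k$ yields $|E_k|/k \le \sum_{e \in E_k} \cost(e)$.

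Since all costs are non-negative we can extend the sum to all edges, $\sum_{e \in E_k} \cost(e) \le \sum_{e \in E} \cost(e)$, and then invoke \autoref{lem:strength1cost} to bound the right-hand side by $n - \component(H)$. Chaining these three inequalities gives
\[
\frac{|E_k|}{k} \;\le\; \sum_{e \in E} \cost(e) \;\le\; n - \component(H),
\]
and rearranging produces $|E_k| \le k(n-\component(H))$, which is exactly the claimed bound.

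There is no real obstacle to this argument: it is a direct averaging corollary of the cost lemma, entirely parallel to the graph case in \cite{BenczurK15}. The only minor point to mention is that the proof works regardless of whether one interprets ``$k$-weak'' as $\strength(e) < k$ or $\strength(e) \le k$; in the unweighted setting strengths are integers, so the strict inequality even yields $|E_k| \le k(n-\component(H)) - 1$, but the weaker stated bound is all that is needed.
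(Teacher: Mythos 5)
Your proof is correct and matches the paper's intent: the corollary is stated without proof precisely because it follows by the averaging argument you give, bounding $\cost(e) \ge 1/k$ for each $k$-weak edge and applying \autoref{lem:strength1cost}. No issues.
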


\begin{lemma}
\label{lem:strongcomponentdisjoint}
The $k$-strong components are pairwise disjoint.  
\end{lemma}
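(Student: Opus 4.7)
The plan is a standard uncrossing argument: I will show that if two $k$-strong components intersect, their union is still $k$-edge-connected, contradicting the maximality in the definition of a $k$-strong component.

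Concretely, suppose for contradiction that $U_1$ and $U_2$ are distinct $k$-strong components with $U_1 \cap U_2 \neq \emptyset$. I will prove that $H[U_1 \cup U_2]$ is $k$-edge-connected, which immediately contradicts the inclusion-wise maximality of $U_1$ (and of $U_2$). To do this, I take an arbitrary non-trivial cut $S$ of $U_1 \cup U_2$ and show $|\delta_{H[U_1 \cup U_2]}(S)| \geq k$. Let $S_1 = S \cap U_1$ and $S_2 = S \cap U_2$.

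The main case is when $S_1$ is a non-trivial cut of $U_1$, i.e., $\emptyset \subsetneq S_1 \subsetneq U_1$. Then $|\delta_{H[U_1]}(S_1)| \geq k$ because $H[U_1]$ is $k$-edge-connected, and every edge in $\delta_{H[U_1]}(S_1)$ lies inside $U_1 \cup U_2$ and genuinely crosses $S$ (its endpoints in $S_1 \subseteq S$ and in $U_1 \setminus S_1 \subseteq (U_1\cup U_2)\setminus S$ witness both sides). The symmetric case for $S_2$ is identical. The remaining case is when $S_1 \in \{\emptyset, U_1\}$ and $S_2 \in \{\emptyset, U_2\}$; here I use $U_1 \cap U_2 \neq \emptyset$ to rule everything out. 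If $S_1 = U_1$ then $S \supseteq U_1 \cap U_2$, forcing $S_2 \neq \emptyset$ and hence $S_2 = U_2$, giving $S = U_1 \cup U_2$, contradicting that $S$ is non-trivial. If $S_1 = \emptyset$ then $S$ is disjoint from $U_1 \cap U_2$, so $S_2 \neq U_2$, hence $S_2 = \emptyset$ and $S = \emptyset$, again contradicting non-triviality.

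I do not expect any real obstacle. The only subtle point to verify carefully is that a crossing edge of $H[U_1]$ is still a crossing edge in $H[U_1 \cup U_2]$ with respect to $S$; in hypergraphs one must check both that the edge remains present (automatic, since the edge is a subset of $U_1$) and that both sides of the cut are touched, which follows directly from $S \cap U_1 = S_1$ and $(U_1 \cup U_2) \setminus S \supseteq U_1 \setminus S_1$. With that verification in hand, the case analysis above yields $\lambda(H[U_1 \cup U_2]) \geq k$, contradicting maximality and proving the lemma.
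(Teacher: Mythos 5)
Your proof is correct and follows the same route as the paper: show that the union of two intersecting $k$-strong components is still $k$-edge-connected and invoke maximality. The only difference is that the paper cites the inequality $\lambda(H[A\cup B])\geq \min(\lambda(H[A]),\lambda(H[B]))$ for intersecting $A,B$ as a known ``triangle inequality for connectivity,'' whereas you prove it inline via the case analysis on $S\cap U_1$ and $S\cap U_2$, which checks out.
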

\begin{proof}
Consider $k$-strong components $A$ and $B$. Assume $A\cap B\neq \emptyset$, then $\lambda(G[A\cup B])\geq \min(\lambda(G[A]),\lambda(G[B])) \geq k$ using triangle inequality for connectivity. This shows $A= A\cup B=B$ by maximality of $A$ and $B$.
\end{proof}
\section{Estimating strengths in unweighted hypergraphs}
\label{sec:unweighted}
In this section, we consider unweighted hypergraphs and describe a
near-linear time algorithm to estimate the strengths of all edges as
stated in \autoref{thm:main}. Let $H=(V,E)$ be the given hypergraph.
The high-level idea is simple. We assume that there is a fast
algorithm $\textsc{WeakEdges}(H,k)$ that returns a set of edges
$E'\subset E$ such that $E'$ contains all the $k$-weak edge in $E$;
the important aspect here is that the output may contain some edges
which are \emph{not} $k$-weak, however, the algorithm should not output
too many such edges (this will be quantified later).

The estimation algorithm is defined in \autoref{alg:estimation}. 
The algorithm repeatedly calls $\textsc{WeakEdges}(H,k)$
for increasing values of $k$ while removing the
edges found in previous iterations. 

%We want to show $\textsc{Estimation}(H,k)$ returns a $c$-approximate strength function for $H$ under some assumptions on $\textsc{WeakEdges}(H,k)$.

\begin{figure}
\centering
\begin{algorithm}
\textsc{Estimation}($H$)\+
\\  Compute a number $k$ such that $\strength_H(e) \ge k$ for all $e \in E(H)$
\\  $H_0 \gets H$, $i\gets 1$
\\  while there are edges in $H_{i-1}$\+
\\    $F_i \gets$ \textsc{WeakEdges}$(H_{i-1},2^i k)$
\\    for $e\in F_i$\+
\\      $\strength'(e)\gets 2^{i-1} k$\-
\\    $H_i\gets H_{i-1}-F_i$
\\    $i \gets i+1$\-
\\  return $\strength'$
\end{algorithm}
\caption{The estimation algorithm}
\label{alg:estimation}
\end{figure}

\begin{lemma}
\label{lem:ei}
Let $H=(V,E)$ be an unweighted hypergraph. Then,
\begin{enumerate}
\item For each $e \in F_i$, $\strength(e) \ge 2^{i-1} k$. That is, the strength
of all edges deleted in iteration $i$ is at least $2^{i-1}k$.
\item For each $e \in E$, $\strength'(e) \le \strength(e)$.
\end{enumerate}
\end{lemma}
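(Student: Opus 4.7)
The plan is to prove part (1) by induction on $i$, and then observe that part (2) drops out immediately. The only facts we need are the specification of $\textsc{WeakEdges}$ (namely, that its output contains every $2^i k$-weak edge of the input) together with \autoref{pro:fundamental}, which says deleting edges cannot increase the strength of the remaining edges.

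For the base case $i=1$, the algorithm picked $k$ so that $\strength_H(e) \ge k = 2^{0} k$ for every $e \in E(H) = E(H_0)$, and in particular for every $e \in F_1$, giving the required bound.

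For the inductive step, fix $i \ge 2$ and $e \in F_i$. Since $F_i \subseteq E(H_{i-1})$ and $H_{i-1} = H_{i-2} \setminus F_{i-1}$, the edge $e$ lies in $H_{i-2}$ and was not included in $F_{i-1}$. The specification of $\textsc{WeakEdges}(H_{i-2}, 2^{i-1} k)$ ensures $F_{i-1}$ contains every $2^{i-1} k$-weak edge of $H_{i-2}$; hence $e$ must be $2^{i-1} k$-strong in $H_{i-2}$, i.e.\ $\strength_{H_{i-2}}(e) \ge 2^{i-1} k$. Now $H_{i-2}$ is obtained from $H$ by a sequence of edge deletions, so applying \autoref{pro:fundamental} once for each deleted edge yields $\strength_H(e) \ge \strength_{H_{i-2}}(e) \ge 2^{i-1} k$, which is the claim.

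Part (2) is then immediate: every $e \in E$ is assigned $\strength'(e) = 2^{i-1} k$ in the unique iteration $i$ in which it is placed into $F_i$, and part (1) gives $\strength_H(e) \ge 2^{i-1} k = \strength'(e)$. The only subtlety worth flagging is that the inductive bound $\strength_{H_{i-2}}(e) \ge 2^{i-1} k$ is a statement in the residual hypergraph $H_{i-2}$, not in $H$; the transfer to $H$ relies crucially on the monotonicity under deletion from \autoref{pro:fundamental}. This is the only place where something nontrivial happens, and it is a one-line appeal rather than a genuine obstacle.
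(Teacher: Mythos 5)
Your proof is correct and follows essentially the same route as the paper: both arguments combine the specification of $\textsc{WeakEdges}$ (every edge surviving iteration $i-1$ is $2^{i-1}k$-strong in $H_{i-2}$) with the monotonicity of strength under deletion from \autoref{pro:fundamental}, and then read off part (2) from the fact that the $F_i$ partition $E$. The only cosmetic difference is that your ``inductive step'' never actually invokes the inductive hypothesis --- the argument is really direct, as is the paper's --- but that does not affect correctness.
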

\begin{proof}
  $\strength_{H_i}(e)\leq \strength_{H}(e)$ for all $i$ and $e\in
  E(H_i)$ because deleting edges cannot increase strength. Let $E_i$ denote
  the set of edges in $H_i$ with $E_0 = E$.

  We prove that $2^{i}k\leq \strength_{H_{i}}(e)$ for all $e\in E_i$
  by induction on $i$.  If $i=0$, then $k \leq \strength_{H_0}(e)$ for
  all $e\in E_0$.  Now we assume $i>0$.  At end of
  iteration $(i-1)$, by induction, we have that
  $\strength_{H_{i-1}}(e) \ge 2^{i-1}k$ for each $e \in E_{i-1}$.  In
  iteration $i$, $F_i$ contains all $2^{i}k$-weak edges in the 
  graph $H_{i-1}$. We hav $E_i = E_{i-1} - F_i$. Thus, for any
  edge $e \in E_i$, $\strength_{H}(e) \ge \strength_{H_{i-1}}(e) \ge
  2^{i}k$. This proves the claim.

  Since $F_i \subset E_{i-1}$, it follows from the previous claim that
  $\strength_{H}(e) \ge 2^{i-1}k$ for all $e \in F_i$. 
  Since the $F_i$ form a partition of $E$, we have
  $\strength'(e)\leq \strength_H(e)$ for all $e\in E$.
\end{proof}

Note that in principle \textsc{WeakEdges}$(H,k)$ could output all
the edges of the graph for any $k \ge \lambda(H)$. This would result
in a high cost for the resulting strength estimate.  Thus, we need
some additional properties on the output of the procedure
\textsc{WeakEdges}$(H,k)$.  Let $H=(V,E)$ be a hypergraph.  A set
of edges $E'\subset E$ is called \EMPH{$\ell$-light} if $|E'|\leq \ell
(\component(H-E')-\component(H))$.  Intuitively, on average, we remove
$\ell$ edges in $E'$ to increase the number of components of $H$ by $1$.

\begin{lemma}
\label{lem:ei2}
If $\textsc{WeakEdges}(H,k)$ outputs a set of $ck$-light edges for all 
$k$, then the output $\strength'$ of the algorithm
$\textsc{Estimation}(H)$ satisfies the $2c$-cost property.
That is, $\sum_{e \in E} \frac{1}{\strength'(e)} \le 2c(n-1)$.
\end{lemma}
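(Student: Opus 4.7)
The plan is to directly bound the sum $\sum_{e \in E} 1/\strength'(e)$ using the $ck$-light property, and show that it telescopes over the iterations of the algorithm.

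First I would rewrite the target sum according to how $\strength'$ is assigned by \textsc{Estimation}. Since the sets $F_1, F_2, \ldots$ partition $E$, and every edge $e \in F_i$ is given $\strength'(e) = 2^{i-1}k$, we have
\[
\sum_{e \in E} \frac{1}{\strength'(e)} \;=\; \sum_{i \ge 1} \frac{|F_i|}{2^{i-1}k}.
\]
Next I would use the hypothesis. In iteration $i$, the algorithm calls $\textsc{WeakEdges}(H_{i-1}, 2^i k)$, so $F_i$ is a $c \cdot 2^i k$-light edge set in $H_{i-1}$. Unpacking the definition, this gives
\[
|F_i| \;\le\; c \cdot 2^i k \cdot \bigl(\component(H_{i-1} - F_i) - \component(H_{i-1})\bigr) \;=\; c \cdot 2^i k \cdot \bigl(\component(H_i) - \component(H_{i-1})\bigr),
\]
using that $H_i = H_{i-1} - F_i$. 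Dividing by $2^{i-1}k$ yields
\[
\frac{|F_i|}{2^{i-1}k} \;\le\; 2c \bigl(\component(H_i) - \component(H_{i-1})\bigr).
\]

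Finally I would sum over $i$ and telescope. Letting $N$ be the last iteration (after which $H_N$ has no edges), the sum collapses to
\[
\sum_{i=1}^{N} \frac{|F_i|}{2^{i-1}k} \;\le\; 2c \bigl(\component(H_N) - \component(H_0)\bigr) \;\le\; 2c(n - 1),
\]
since $\component(H_N) \le n$ and $\component(H_0) = \component(H) \ge 1$. Combined with the first display this proves the $2c$-cost property.

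There is no real obstacle: the only thing to check carefully is that the definition of $\ell$-light is applied to the correct hypergraph at each step (namely $H_{i-1}$, not $H$), which is precisely what the algorithm does, and that the factor of $2$ in $2c$ comes from the discrepancy between the threshold $2^i k$ used inside \textsc{WeakEdges} and the assigned value $2^{i-1}k$.
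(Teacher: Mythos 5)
Your proof is correct and follows essentially the same route as the paper's: rewrite the sum over the partition $F_1, F_2, \ldots$, apply the $c\cdot 2^i k$-lightness of $F_i$ with respect to $H_{i-1}$, and telescope the component counts to get the $2c(n-1)$ bound. The only difference is that you make the telescoping step explicit, which the paper leaves implicit.
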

\begin{proof}
  From the description of $\textsc{Estimation}(H)$, and using the
  fact that the edges sets $F_1,F_2,\ldots,$ partition $E$, we have 
  $\sum_{e \in E} \frac{1}{\strength'(e)} = \sum_{i \ge 1} |F_i|
  \frac{1}{2^{i-1}k}$.

$F_i$ is the output of $\textsc{WeakEdges}(H_{i-1},2^ik)$.
From the lightness property we assumed, $|F_i| \le c 2^ik (\component(H_i) - \component(H_{i-1}))$. Combining this with the preceding equality, 
\[
\sum_{e \in E} \frac{1}{\strength'(e)} = \sum_{i \ge 1} |F_i|
  \frac{1}{2^{i-1}k} \le \sum_{i \ge 1} 2c (\component(H_i) - \component(H_{i-1})) \leq 2c (n-1).
  \]
\end{proof}

\subsection{Implementing \textsc{WeakEdges}}
We now show describe an implementation of \textsc{WeakEdges}$(H,k)$ 
that outputs a $4rk$-light set.

Let $H=(V,E)$ be a hypergraph.  An edge $e$ is \EMPH{$k$-crisp} with
respect to $H$ if it crosses a cut of value less than $k$. In other
words, there is a cut $X$, such that $e\in \delta(X)$ and
$|\delta(X)|< k$. Note that any $k$-crisp edge is $k$-weak.  A set of
edges $E'\subset E$ is a \EMPH{$k$-partition}, if $E'$ contains all
the $k$-crisp edges in $H$.  A $k$-partition may contain non-$k$-crisp
edges.

We will assume access to a subroutine $\textsc{Partition}(H,k)$ that
given $H$ and integer $k$, it finds a $2k$-light $k$-partition of $H$. We will
show how to implement \textsc{Partition} later. See
\autoref{alg:weakedge} for the implementation of \textsc{WeakEdges}.

\begin{figure}[htb]
\centering
\begin{algorithm}
\textsc{WeakEdges}$(H,k)$\+
\\ $E'\gets \emptyset$
\\ repeat $1+\log_2 n$ times:\+
\\   $E'\gets E'\cup \textsc{Partition}(H,2rk)$
\\   $H\gets H-E'$\-
\\ return $E'$\-
\end{algorithm}
\caption{Algorithm for returning a $4rk$-light set of all $k$-weak edges in $H$.}
\label{alg:weakedge}
\end{figure}
\begin{theorem}
\label{thm:weakedgerun}
\textsc{WeakEdges}$(H,k)$ returns a $4rk$-light set $E'$ such that $E'$ contains all the $k$-weak edges of $H$ with $O(\log n)$ calls to $\textsc{Partition}$.
\end{theorem}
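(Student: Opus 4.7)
\textit{Proof plan.} I will prove the three assertions of Theorem~\ref{thm:weakedgerun} in sequence. The bound $O(\log n)$ on the number of calls to \textsc{Partition} is immediate, since the algorithm executes a loop of $1+\log_2 n$ iterations, each making exactly one call.

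For the $4rk$-lightness of $E'$, I telescope the per-iteration lightness guarantee. Each $F_i = \textsc{Partition}(H_{i-1}, 2rk)$ is $4rk$-light with respect to $H_{i-1}$, so $|F_i| \le 4rk\bigl(\component(H_i) - \component(H_{i-1})\bigr)$. Summing over $i$ telescopes to $|E'| = \sum_i |F_i| \le 4rk\bigl(\component(H - E') - \component(H)\bigr)$, which is exactly $4rk$-lightness of $E'$ with respect to $H$.

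For the completeness assertion---that $E'$ contains every $k$-weak edge of $H$---I argue by contradiction. Suppose some $k$-weak edge $e$ of $H$ survives all iterations, so $e \in H_i$ for every $i \le 1+\log_2 n$. By Proposition~\ref{pro:fundamental}, deletion does not increase strength, hence $e$ stays $k$-weak in each $H_i$, and the connected component $C_i$ of $e$ in $H_i$ satisfies $\lambda(H_i[C_i]) \le \strength_{H_i}(e) < k \le 2rk$. Since $C_i$ is a whole connected component of $H_i$, any cut $X \subsetneq C_i$ with $|\delta_{H_i[C_i]}(X)| < 2rk$ has the same value as the extended cut in $H_i$ (no edges leave $C_i$ in $H_i$), so its edges are $2rk$-crisp in $H_i$ and are caught in $F_{i+1}$, forcing $C_i$ to strictly split after iteration $i+1$.

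The quantitative core of the argument is the inductive bound $|C_i| \le n/2^i$; at $i = 1+\log_2 n$ this yields $|C_i| < 1$, contradicting $e \in C_i$. The main obstacle is the halving step $|C_{i+1}| \le |C_i|/2$, since a single low-value cut of $H_i[C_i]$ can be arbitrarily unbalanced and may place $e$ on the larger side. The resolution is to exploit that $F_{i+1}$ removes the edges of \emph{every} cut of $H_i$ of value $<2rk$, not only one, and that the factor $2r$ of slack between $k$ and $2rk$---where the rank $r$ enters essentially---allows unbalanced slivers to be peeled repeatedly from $C_i$ until the side containing $e$ has at most half the vertices. This balanced-separator step parallels the graph argument in \cite{BenczurK15} adapted to hypergraphs, and I expect it to be the main technical hurdle of the proof.
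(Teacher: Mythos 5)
Your treatment of the call count and of $4rk$-lightness is correct and coincides with the paper's: the loop makes $1+\log_2 n$ calls, and the per-iteration bound $|F_i|\le 4rk\bigl(\component(H_i)-\component(H_{i-1})\bigr)$ telescopes to $4rk$-lightness of $E'$. The gap is in the completeness assertion. You correctly observe that a cut of $H_i[C_i]$ of value $<2rk$ exists and that its edges are removed, and you correctly identify that such a cut may be arbitrarily unbalanced, so that tracking the component of a single surviving $k$-weak edge does not obviously halve anything. But you then defer the resolution (``peeling unbalanced slivers until the side containing $e$ has at most half the vertices'') and explicitly label it the main technical hurdle without carrying it out. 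As written this is not a proof, and it is doubtful the sketch can be completed in that form: removing the crisp edges of every small cut simultaneously does not by itself control on which side of the resulting decomposition $e$ lands, and the factor $2r$ of slack plays no identifiable role in your peeling argument.

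The missing idea is a global degree-counting argument rather than a per-edge one, and it is where the rank $r$ actually enters. After contracting the $k$-strong components (pairwise disjoint by Lemma~\ref{lem:strongcomponentdisjoint}; the contraction preserves the strengths of the $k$-weak edges by Lemma~\ref{lem:deletecontract}), every remaining edge is $k$-weak, so by Corollary~\ref{cor:weakedgebound} each connected component on $n'$ vertices has at most $k(n'-1)$ edges and hence total degree at most $rk(n'-1)$. By averaging, at least half of its vertices have degree less than $2rk$; for each such vertex the singleton cut has value $<2rk$, so all of its incident edges are $2rk$-crisp and are removed by $\textsc{Partition}(H,2rk)$, isolating that vertex. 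Thus the number of non-isolated vertices at least halves in every iteration, and after $1+\log_2 n$ iterations no edges remain, so every $k$-weak edge has been placed in $E'$. This counting step replaces your balanced-separator step entirely and is the part your proposal is missing.
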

\begin{proof}
  First, we assume $H$ has no $k$-strong component with more than $1$
  vertex and that $H$ is connected. Then all edges are $k$-weak and
  the number of $k$-weak edges in $H$ is at most $k(n-1)$ by
  \autoref{cor:weakedgebound}. It also implies that $\sum_{v} \deg(v)
  \leq rk(n-1)$.  By Markov's inequality at least half the vertices have degree less than $2rk$.  For any vertex $v$ with degree less
  than $2rk$, all edges incident to it are $2rk$-crisp.  Thus, after
  the first iteration, all such vertices become isolated since
  $\textsc{Partition}(H,2rk)$ contains all $2rk$-crisp edges. If $H$ is not connected then we can
  apply this same argument to each connected component and deduce that
  at least half of the vertices in each component will be isolated in
  the first iteration Therefore, in
  $\log n$ iterations, all vertices become isolated.  Hence
  $\textsc{WeakEdges}(H,k)$ returns all the edges of $H$.
  
  Now consider the general case when $H$ may have $k$-strong
  components.  We can apply the same argument as above to the
  hypergraph obtained by contracting each $k$-strong component into
  a single vertex. The is well defined because the $k$-strong components
  are disjoint by \autoref{lem:strongcomponentdisjoint}.
  
  Let the edges removed in the $i$th iteration to be $E_i$, and the hypergraph
  before the edge removal to be $H_i$. So $H_{i+1}=H_i-E_i$.  
  Recall that \textsc{Partition}$(H_i,2rk)$ returns a $4rk$-light set. Hence
  we know $|E_i| \leq 4rk(\component(H_{i+1})-\component(H_i))$.

  \[
  |E'| = \sum_{i\geq 1} |E_i| \leq \sum_{i\geq 1} 4rk(\component(H_{i+1})-\component(H_i)) = 4rk(\component(H-E')-\component(H))
  \]
  This shows $E'$ is $4rk$-light.
\end{proof}

It remains to implement \textsc{Partition}($H,k$) that returns a
$2k$-light $k$-partition. To do this, we introduce $k$-sparse
certificates.  Let $H=(V,E)$ be a hypergraph.  A subset of edges
$E'\subset E$, define $H'=(V,E')$ is a \EMPH{$k$-sparse certificate}
if $|\delta_{H'}(A)|\geq \min(|\delta_{H}(A)|,k)$ for all $A\subset V$
where $H'=(V,E')$. 

\begin{theorem}
\label{thm:certificateruntime}
There is an algorithm $\textsc{Certificate}(H,k)$ that given hypergraph $H$
and integer $k$, finds a $k$-sparse certificate $E'$ of $H$ in $O(p)$ time
such that $|E'|\leq k(n-1)$.
\end{theorem}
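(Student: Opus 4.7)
The plan is to give a hypergraph analogue of the Nagamochi--Ibaraki forest-packing construction. I would maintain $k$ edge-disjoint sub-hypergraphs $F_1,\ldots,F_k$, each an acyclic hypergraph (a hypergraph forest). The algorithm processes the hyperedges of $H$ in some fixed order and places each hyperedge $e$ into $F_i$, where $i$ is the smallest index such that the vertices of $e$ are \emph{not} all contained in a single connected component of $F_i$; if no such $i \leq k$ exists, $e$ is discarded. The output is $E' = F_1 \cup \cdots \cup F_k$. The size bound is immediate: each hyperedge placed into $F_i$ strictly decreases the number of components of $F_i$, so $|F_i| \leq n - 1$, and $|E'| \leq k(n-1)$.

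For the $k$-sparse certificate property, fix a cut $A \subset V$ and enumerate the crossing hyperedges of $\delta_H(A)$ in processing order as $e_1,e_2,\ldots,e_\ell$, where $\ell = |\delta_H(A)|$. I claim $e_t$ is placed in $F_{i_t}$ with $i_t \leq t$ for every $t$. Given the claim, $e_1,\ldots,e_{\min(\ell,k)}$ all lie in $F_1 \cup \cdots \cup F_k = E'$, so $|\delta_{H'}(A)| \geq \min(|\delta_H(A)|,k)$, as desired. To prove the claim, let $a_j$ denote the number of crossing hyperedges already placed in $F_j$ at the moment $e_t$ is processed, so $\sum_{j \ge 1} a_j = t-1$. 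Whenever $a_j = 0$, every connected component of $F_j$ lies entirely in $A$ or entirely in $V \setminus A$, so $e_t$ (having vertices on both sides of $A$) spans multiple components of $F_j$, forcing $i_t \leq j$. If all of $a_1,\ldots,a_t$ were positive, their sum would be at least $t > t-1$, a contradiction; hence some $j \leq t$ has $a_j = 0$ and $i_t \leq j \leq t$.

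The main obstacle is the $O(p)$ running time. A naive implementation that maintains a union--find structure for each $F_i$ and tests each hyperedge against each forest in order costs $\Omega(kp)$ in the worst case, which is too slow once $k$ is large. Achieving linear time requires the hypergraph maximum-adjacency ordering together with a bucket-based scanning scheme, which simultaneously produces the vertex ordering and assigns each hyperedge its forest label in amortized time proportional to its arity, summing to $O(p)$ overall. This is exactly the construction developed in our prior work \cite{ChekuriX17}, and I would invoke that implementation as a black box to complete the proof.
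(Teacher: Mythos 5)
Your construction is correct, but it argues correctness along a different route than the paper. The paper's proof is terse: it takes the MA-ordering of vertices and the induced head ordering of edges from \cite{ChekuriX17}, defines $D_k(v)$ as the first $k$ backward edges at each vertex $v$, outputs $E_k=\bigcup_v D_k(v)$, and cites \cite{ChekuriX17} both for the fact that this is a $k$-sparse certificate and for the running time; the size bound $|E_k|\le k(n-1)$ is immediate since $|D_k(v)|\le k$ and $D_k(v_1)=\emptyset$. You instead give a self-contained proof of the certificate property for the greedy forest packing (the hypergraph analogue of the Nagamochi--Ibaraki forest decomposition): your counting argument --- that among $F_1,\ldots,F_t$ some forest has no crossing edge of the cut $A$ when $e_t$ arrives, forcing $e_t$ into a forest of index at most $t$ --- is sound, as is the $|F_i|\le n-1$ bound. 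What your approach buys is that the reader sees \emph{why} the output is a certificate without opening \cite{ChekuriX17}; what it costs is that your correctness proof attaches to the greedy algorithm, while the $O(p)$-time algorithm you actually invoke is the MA-ordering one, so you are implicitly asserting that the two constructions coincide (true for graphs via scan-first search, and established for hypergraphs in \cite{ChekuriX17}, but worth stating explicitly rather than folding into the phrase ``exactly the construction''). Two minor points: your $\sum_j a_j = t-1$ should be $\le t-1$ since crossing edges beyond the $k$-th may be discarded (the inequality is the direction you need, so nothing breaks); and note the paper itself reports the MA-ordering cost as $O(p+n\log n)$, so your deferral to that subroutine inherits the same mild tension with the stated $O(p)$ bound that the paper has.
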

\begin{proof}
See \autoref{sec:sparsecertificate}.
\end{proof}

A $k$-sparse certificate $E'$ is certainly a $k$-partition.  However,
$E'$ may contain too many edges to be $2k$-light. Thus, we would like
to find a smaller subset of $E'$. Note that every $k$-crisp edge 
must be in a $k$-sparse certificate and hence no edge in
$E\setminus E'$ can be $k$-crisp.  Hence we will
contract the edges in $E\setminus E'$, and find a $k$-sparse
certificate in the hypergraph after the contraction. We repeat the
process until eventually we reach a $2k$-light set.  See
\autoref{alg:partition} for the formal description of the algorithm.

\begin{figure}
\centering
\begin{algorithm}
\textsc{Partition}($H,k$)\+
\\if number of edges in $H\leq 2k(n-\component(H))$\+
\\  $E'\gets $edges in $H$
\\  return $E'$\-
\\  else\+
\\  $E'\gets $ $\textsc{Certificate}(H,k)$
\\  $H\gets $ contract all edges of $H-E'$
\\  return \textsc{Partition}($H,k$)\-
\end{algorithm}
\caption{Algorithm for returning a $2k$-light $k$-partition.}
\label{alg:partition}
\end{figure}

\begin{theorem}
\label{thm:partitionruntime}
\textsc{Partition}($H,k$) outputs a $2k$-light $k$-partition in $O(p\log n)$ time.  
\end{theorem}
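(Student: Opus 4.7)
The plan is to verify three claims: that the returned set is a $k$-partition of $H$, that it is $2k$-light, and that the running time is $O(p\log n)$. Both correctness claims proceed by induction on the recursion depth; the running time bound combines a per-iteration cost with a simple potential argument on the number of iterations.

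For the $k$-partition property I would dispatch the base case trivially (every edge of $H$ is returned, and in particular every $k$-crisp one). In the recursive case let $E'$ be the $k$-sparse certificate returned by \textsc{Certificate}$(H,k)$. Since $E'$ preserves every cut of value less than $k$, no edge of $E(H)\setminus E'$ crosses such a cut, so none is $k$-crisp. I would then argue that contracting non-$k$-crisp edges cannot alter the $k$-crispness of the surviving edges: a cut of value less than $k$ witnessing the $k$-crispness of some remaining edge $f$ cannot separate the endpoints of any non-$k$-crisp edge, and hence survives the contraction with the same value. Thus the $k$-crisp edges of $H'=H/(E\setminus E')$ are exactly the $k$-crisp edges of $H$ contained in $E'$, and the inductive call returns a $k$-partition of $H'$, which is a $k$-partition of $H$.

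For $2k$-lightness the base case is immediate: removing every edge from $H$ yields $n$ isolated vertices, so $|E(H)|\le 2k(n-\component(H))=2k(\component(H-E(H))-\component(H))$. In the recursive case let $F$ be the set returned by the recursive call on $H'$. I would establish two identities: $\component(H')=\component(H)$, because contracting an edge never changes the component count (the contracted vertices already lie in a common component); and $\component(H'-F)=\component(H-F)$, because the edges of $E(H)\setminus E'$ survive in $H-F$ (as $F\subseteq E'$) and merge exactly those vertices identified into each super-vertex of $H'$, so the components of $H-F$ correspond bijectively with those of $H'-F$. Substituting both identities into the inductive bound $|F|\le 2k(\component(H'-F)-\component(H'))$ gives the claim.

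For the running time, each iteration calls \textsc{Certificate} once, which runs in $O(p_i)$ time on the current hypergraph $H_i$, and since contraction never increases $p$ we have $p_i\le p$, so each iteration costs $O(p)$. To bound the number of iterations by $O(\log n)$, I would analyze a single connected component at a time (contractions stay within a component). In a non-base iteration the test fails, so $|E_i|>2k(n_i-1)$; the certificate bound gives $|E_{i+1}|\le k(n_i-1)$. If the next iteration is again non-base, the same test yields $2k(n_{i+1}-1)<|E_{i+1}|\le k(n_i-1)$, so $n_{i+1}-1<(n_i-1)/2$. Hence $n_i-1$ halves until the base case triggers, which happens after at most $\lceil\log_2 n\rceil$ iterations. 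Combining, the total running time is $O(p\log n)$.

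The main obstacle is the $2k$-lightness step, specifically the identity $\component(H'-F)=\component(H-F)$, which requires keeping careful track of how the contracted edges of $E(H)\setminus E'$ interact with the surviving certificate edges. Once this identity is in hand, the rest follows from the cut-preserving property of sparse certificates together with the halving recurrence above.
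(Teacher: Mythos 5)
Your proposal is correct and follows essentially the same route as the paper: the base case is handled by the explicit lightness test, the recursive case uses that $E\setminus E'$ contains no $k$-crisp edges so contraction preserves both $k$-crispness and the relevant cut/component structure, and the running time follows from the same halving argument (non-base test $|E_i|>2k(n_i-1)$ combined with the certificate bound $|E_{i+1}|\le k(n_i-1)$ forces $n_{i+1}-1<(n_i-1)/2$). Your explicit verification of the component identities $\component(H')=\component(H)$ and $\component(H'-F)=\component(H-F)$ spells out a step the paper leaves implicit, but it is the same argument.
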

\begin{proof}
  If the algorithm either returns all the edges of the graph $H$ 
  in the first step then it is easy to see that the output is
  a $2k$-light $k$-partition since the algorithm explicitly checks
  for the lightness condition.

  Otherwise let $E'$ be the output of $\textsc{Certificate}(H,k)$.
  As we argued earlier, $E - E'$ contains no $k$-crisp edges and hence 
  contracting them is safe. Moreover, all the original $k$-crisp edges
  remain $k$-crisp after the contraction. Since the algorithm recurses
  on the new graph, this establishes the correctness of the output.

  We now argue for termination and running time by showing that the
  number of vertices halves in each recursive call.  Assume $H$
  contains $n$ vertices, the algorithm finds a $k$-sparse certificate
  and contracts all edges not in the certificate.  The resulting
  hypergraph has $n'$ vertices and $m'$ edges. We have $m'\leq k(n-1)$
  by \autoref{thm:certificateruntime}.  If $n'-1\leq (n-1)/2$, then
  the number of vertices halved. Otherwise $n'-1>(n-1)/2$, then the
  number of edges $m'\leq k(n-1)<2k(n'-1)$, and the algorithm
  terminates in the next recursive call.

  The running time of the algorithm for a size $p$ hypergraph with $n$
  vertices is $T(p,n)$. $T(p,n)$ satisfies the recurrence $T(p,n) =
  O(p) + T(p,n/2) = O(p\log n)$.
\end{proof}

Putting things together, $\textsc{Estimation}(H)$ finds
the desired $O(r)$-approximate strength. 
\begin{theorem}
\label{thm:estimationtime}
Let $H=(V,E)$ be a unweighted hypergraph. 
The output $\strength'$ of $\textsc{Estimation}(H)$ is a $O(r)$-approximate 
strength function of $H$.
$\textsc{Estimation}(H)$ can be implemented in $O(p\log^2 n \log p)$ time.
\end{theorem}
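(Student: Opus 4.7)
The plan is to derive the theorem by assembling the pieces already established and then doing straightforward bookkeeping for the runtime. The argument splits into two parts: (i) verifying the two properties of an $O(r)$-approximate strength function, and (ii) bounding the running time.

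For (i), the \emph{lower bound property} $\strength'(e) \le \strength_H(e)$ is exactly Lemma~\ref{lem:ei}(2) and requires nothing further. For the \emph{$c$-cost property}, I would invoke Theorem~\ref{thm:weakedgerun}, which guarantees that $\textsc{WeakEdges}(H,k)$ returns a $4rk$-light set of edges that contains every $k$-weak edge. Plugging $c = 4r$ into Lemma~\ref{lem:ei2} immediately yields
$$\sum_{e\in E} \frac{1}{\strength'(e)} \le 8r(n-1),$$
which is the desired $O(r)$-cost property.

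For (ii), the two quantities to pin down are the initial value of $k$ used in $\textsc{Estimation}$ and the number of outer iterations. Since $H$ is unweighted, every edge $e$ satisfies $\strength_H(e)\ge 1$ (the subhypergraph induced on the endpoints of $e$ has mincut at least one), so I would set $k = 1$. By Lemma~\ref{lem:ei}(1) every edge remaining in $H_i$ has strength at least $2^i$, while Lemma~\ref{lem:strengthboundbycut} gives $\strength_H(e) \le \min_{v\in e}\deg(v) \le p$, because isolating any single vertex $v\in e$ produces a cut of value $\deg(v)$ that crosses $e$. Hence the outer loop of $\textsc{Estimation}$ terminates after $O(\log p)$ iterations.

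Each outer iteration performs one call to $\textsc{WeakEdges}$, which by Theorem~\ref{thm:weakedgerun} makes $O(\log n)$ calls to $\textsc{Partition}$; each $\textsc{Partition}$ call runs in $O(p \log n)$ time by Theorem~\ref{thm:partitionruntime}. Multiplying the three factors gives $O(\log p)\cdot O(\log n)\cdot O(p\log n) = O(p\log^2 n \log p)$, as claimed. The only step that takes any thought is getting a clean upper bound on the maximum strength (to bound the number of outer iterations); everything else is direct composition of results already proved in the earlier sections.
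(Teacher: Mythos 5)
Your proposal is correct and follows essentially the same route as the paper: it combines Lemma~\ref{lem:ei}, Lemma~\ref{lem:ei2} (with $c=4r$ from Theorem~\ref{thm:weakedgerun}) for the approximation guarantee, and multiplies the $O(\log p)$ outer iterations by the $O(p\log^2 n)$ cost of each \textsc{WeakEdges} call for the runtime. The only difference is that you spell out the justification that the maximum strength is at most $p$ (via the degree cut and Lemma~\ref{lem:strengthboundbycut}), which the paper asserts without proof.
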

\begin{proof}
  Combining \autoref{lem:ei}, \autoref{lem:ei2} and \autoref{thm:weakedgerun}, we 
  get the output $\strength'$ of $\textsc{Estimation}(H)$ is a $O(r)$-approximate 
  strength function.
  The maximum strength in the graph is at most $p$, all edges with be removed
  at the $(1+\log p)$th iteration of the while loop. 
  In each iteration, there is one call to \textsc{WeakEdges}. 
  Each call of \textsc{WeakEdges} takes $O(p\log^2 n)$ time by combining
  \autoref{thm:partitionruntime} and \autoref{thm:weakedgerun}.
  The step outside the while loop takes linear time, since we can set $k$ to be $1$
  as a lower bound of the strength.
  Hence overall, the running time is $O(p\log^2 n \log p)$.
\end{proof}

\section{Estimating strengths in weighted hypergraphs}

Consider a weighted hypergraph $H=(V,E)$ with an associated weight
function $w:E\to \N_+$; that is, we assume all weights are
non-negative integers. 
%$H$ is a implicit representation of an unweighted hypergraph $H'$,
%where $H'$ contain $w(e)$ copies of edge $e$ for every edge $e$ in $H$.
For proving correctness, we consider an
unweighted hypergraph $H'$ that simulates $H$. Let $H'$ contain $w(e)$
copies of edge $e$ for every edge $e$ in $H$; one can see that the
strength of each of the copies of $e$ in $H'$ is the same as the
strength of $e$ in $H$. Thus, it suffices to compute strengths of edges
in $H'$. We can apply the correctness proofs from the previous
sections to $H'$.  In the remainder of the section we will only be
concerned with the running time issue since we do not wish to
explicitly create $H'$. We say $H$ is the implicit representation of $H'$.

We can implement $\textsc{Certificate}(H,k)$ that finds an implicit representation
of the $k$-sparse certificate $E'$ of $H'$ such that $|E'|\leq k(n-1)$, in $O(p+n\log n)$ time,
where $p$ is the number of edges in the implicit representation, see \autoref{sec:sparsecertificate}. 

The remaining operations in \textsc{Partition} and \textsc{WeakEdges} consist
only of adding edges, deleting edges and contracting edges. These 
operations take time only depending on the size of the implicit representation.
Therefore the running time in \autoref{thm:partitionruntime} and \autoref{thm:weakedgerun}
still holds for weighted hypergraph. 

\begin{theorem}
\label{thm:estimationtimeweight}
Given $b$ a lower bound of the strength. If the total weight
of all edges is at most $bM$, then $\textsc{Estimation}(H)$ can be
implemented in $O(p\log^2 n \log M)$ time.
\end{theorem}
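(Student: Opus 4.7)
The plan is to mirror the proof of \autoref{thm:estimationtime}, with the only new content being a bound on the number of iterations of the outer \textbf{while} loop that is sharper than the naive bound of $\log p$. Correctness has already been handled: the implicit multigraph $H'$ (in which each $e$ is replaced by $w(e)$ parallel copies) has the same strength on each copy as $\strength_H(e)$, and the analyses of \autoref{lem:ei}, \autoref{lem:ei2}, and \autoref{thm:weakedgerun} apply verbatim to $H'$. Moreover, as noted in the preceding paragraph of the excerpt, $\textsc{Certificate}$, $\textsc{Partition}$, and $\textsc{WeakEdges}$ can all be run on the implicit representation in the same time as in the unweighted case, so one call to $\textsc{WeakEdges}(H,k)$ still costs $O(p \log^2 n)$.

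The only remaining ingredient is counting loop iterations. First, I would initialize the algorithm with $k := b$; this is a valid starting strength lower bound by hypothesis. After $i$ iterations of the \textbf{while} loop, the residual hypergraph $H_i$ contains only edges of strength at least $2^i b$ in $H_{i-1}$, and hence in $H$. On the other hand, for any edge $e$, \autoref{lem:strengthboundbycut} gives $\strength_H(e) \le \lambda(H)$, and $\lambda(H)$ is upper bounded by the total edge weight, which is at most $bM$ by assumption. Therefore once $2^i b > bM$, that is $i > \log_2 M$, no edge of $H$ can satisfy $\strength(e) \ge 2^i b$, so $H_i$ is empty and the loop terminates. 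This yields at most $1 + \lceil \log_2 M \rceil = O(\log M)$ iterations.

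Multiplying the per-iteration cost $O(p \log^2 n)$ by the $O(\log M)$ iterations, and noting that the setup step (computing the lower bound $k = b$ and bookkeeping) is linear, gives the claimed $O(p \log^2 n \log M)$ bound.

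The only mildly subtle point, and the one I would be most careful about, is justifying that $\lambda(H) \le bM$ so that the maximum strength is bounded by $bM$ and the loop really does finish in $O(\log M)$ rounds; this is immediate from the fact that the cut separating any single vertex has weight at most the total edge weight. Everything else is a direct transfer of the unweighted analysis through the implicit-representation viewpoint.
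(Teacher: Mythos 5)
Your proposal follows the paper's proof exactly: set $k=b$ in the first step, observe that the maximum strength is at most $bM$ so the while loop runs $O(\log M)$ times, and multiply by the $O(p\log^2 n)$ cost of each call to \textsc{WeakEdges}. One intermediate step of your justification is wrong as stated, however: \autoref{lem:strengthboundbycut} does not give $\strength_H(e) \le \lambda(H)$. That lemma bounds $\strength_H(e)$ by the value of a cut that $e$ \emph{crosses}, and $e$ need not cross a minimum cut of $H$; in fact, since $\strength_H(e) = \max_{e\subset U} \lambda(H[U])$ and $U=V$ is a candidate, every edge satisfies $\strength_H(e) \ge \lambda(H)$, so the inequality you wrote is backwards (two dense blocks joined by a single light edge is the standard example where strength vastly exceeds $\lambda(H)$). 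The conclusion you actually need, $\strength_H(e) \le bM$, is still correct and follows from a correct application of the same lemma: for any $v \in e$, the edge $e$ crosses the singleton cut $\delta(\{v\})$, whose weight is at most the total edge weight, which is at most $bM$. With that one-line repair your argument is complete and coincides with the paper's (which asserts the $bM$ bound on the maximum strength without spelling out this justification).
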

\begin{proof}
  Because $b$ is a lower bound of the strength, we can set $k$ to be $b$ in the
  first step.
  The maximum strength in the graph is at most $bM$, all edges will be removed
  at the $(1+\log M)$th iteration of the while loop. 
  Each iteration calls $\textsc{WeakEdges}$ once, hence the running time is
  $O(p\log^2 n \log M)$.
\end{proof}

The running time in \autoref{thm:estimationtimeweight} can be improved to strongly polynomial time by using the windowing technique \cite{BenczurK15}.

Assume we have disjoint intervals $I_1,\ldots,I_t$ where for every
$e\in E$, $\strength(e)\in I_i$.  In addition, assume $I_i=[a_i,b_i]$,
$b_i \leq p^2 a_i$ and $b_i\leq a_{i+1}$ for all $i$.  We can
essentially apply the estimation algorithm to edges with strength
inside each interval. Let $E_i$ be the set of edges whose strengh lies
in interval $I_i$.  Indeed, let $H_i$ to be the graph obtained from
$H$ by contracting all edges in $E_j$ where $j>i$, and deleting all
edges in $E_{j'}$ where $j'<i$. For edge $e \in E_i$ let $e'$ be its
corresponding edge in $H_i$.  From \autoref{lem:deletecontract},
$\strength_{H_i}(e') = \strength_H(e)$.  The total weight of $H_i$ is
at most $p b_i\leq p^3 a_i$.  We can run $\textsc{Estimation}(H_i)$ to
estimate $\strength_{H_i}$ since the ratio between the lower bound
$a_i$ and upper bound $b_i$ is $p^3$.  Let $p_i$ be the size of $H_i$,
and $n_i$ be the number of vertices in $H_i$, the running time for
$\textsc{Estimation}(H_i)$ is $O(p_i\log^2 n_i\log
p_i)$ by \autoref{thm:estimationtimeweight}.  The total running time of
\textsc{Estimation} over all $H_i$ is $O(\sum_{i} p_i \log^2 n_i \log p_i)
= O(p \log^2 n \log p)$.  Constructing $H_i$ from $H_{i+1}$ takes
$O(p_i+p_{i+1})$ time: contract all edges in $H_{i+1}$ and then add
all the edges $e_i$ where $\strength(e)\in I_i$.  Therefore we can
construct all $H_1,\ldots,H_t$ in $O(p)$ time.

It remains to find the intervals $I_1,\ldots,I_t$.  For each edge $e$,
we first find values $d_e$, such that $d_e \leq \strength(e)\leq p
d_e$.  The maximal intervals in $\bigcup_{e\in E}[d_e,p d_e]$ are the
desired intervals. We now describe the procedure to find the values
$d_e$ for all $e \in E$.

\begin{definition}
  The star approximate graph $A(H)$ of $H$ is a weighted graph
  obtained by replacing each hyperedge $e$ in $H$ with a star $S_e$,
  where the center of the star is an arbitrary vertex in $e$, and the
  star spans each vertex in $e$. Every edge in $S_e$ has weight equal
  to the weight of $e$.
\end{definition}

It is important that $A(H)$ is a multigraph: parallel edges are
distinguished by which hyperedge it came from.  We define a
correspondence between the edges in $A(H)$ and $H$ by a function
$\pi$. For an edge $e'$ in $A(H)$, $\pi(e')=e$ if $e'\in S_e$.  Let
$T$ be a maximum weight spanning tree in $A(H)$.  For $e\in E$, define $T_e$
to be the minimal subtree of $T$ that contains all vertices in $e$.
Note that all the leaves of $T_e$ are vertices from $e$.

For any two vertices $u$ and $v$, we define $d_{uv}$ to be the weight
of the minimum weight edge in the unique $u$-$v$-path in $T$.  For
each edge $e \in E$ we let $d_e= \min_{u,v\in e} d_{uv}$. We will show
$d_e$ satisfies the property that $d_e\leq \strength(e)\leq pd_e$.

Let $V_e = \bigcup_{e'\in T_e} \pi(e')$. Certainly, $\strength(e)\geq
\lambda(H[V_e])$, because all vertices of $e$ are contained in $V_e$.
$\lambda(H[V_e])\geq d_e$ because every cut in $H[V_e]$ has to cross
some $\pi(e')$ where $e'\in T_e$, and the weight of $\pi(e')$ is at
least $d_e$.  Hence $\strength(e)\geq d_e$.

We claim if we remove all edges in $H$ with weight at most $d_e$, then
it will disconnect some $s,t\in e$. If the claim is true then $e$
crosses a cut of value at most $p d_e$.  By
\autoref{lem:strengthboundbycut}, $\strength(e) \leq p d_e$.  Assume
that the claim is not true. Then, in the graph $A(H)$ we can remove
all edges with weight at most $d_e$ and the vertices in $e$ will still
be connected. We can assume without loss of generality that the
maximum weight spanning tree $T$ in $A(H)$ is computed using the
greedy Kruskal's algorithm. This implies that $T_e$ will contain only
edges with weight strictly greater than $d_e$. This contradicts the
definition of $d_e$.

$A(H)$ can be constructed in $O(p)$ time.  The maximum spanning tree
$T$ can be found in $O(p+n\log n)$ time.  We can construct a data
structure on $T$ in $O(n)$ time, such that for any $u,v\in V$, it
returns $d_{uv}$ in $O(1)$ time. \cite{Chazelle1987}
To compute $d_e$, we fix some vertex
$v$ in $e$, and compute $d_e = \min_{u\in e,v\neq u} d_{uv}$ using the
data structure in $O(|e|)$ time.  Computing $d_e$ for all $e$ takes in
$O(\sum_{e\in E} |e|) = O(p)$ time. The total running time is
$O(p+n\log n)$.  We conclude the following theorem.

\begin{lemma}
  Given a weighted hypergraph $H$, we can find a value $d_e$ for each
  edge $e$, such that $d_e\leq \strength(e)\leq p d_e$ in $O(p+n\log
  n)$ time.
\end{lemma}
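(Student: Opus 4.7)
My plan is to implement exactly the construction sketched in the narrative preceding the lemma: build the star approximate graph $A(H)$, compute a maximum weight spanning tree $T$ of $A(H)$, and set $d_e = \min_{u,v \in e, u \neq v} d_{uv}$, where $d_{uv}$ denotes the minimum edge weight on the unique $u$-$v$ path in $T$. The time budget can be accounted for as follows: $A(H)$ has $O(p)$ edges and is built in $O(p)$ time by scanning each hyperedge once; a maximum spanning tree of $A(H)$ is computed in $O(p + n\log n)$ using Prim's algorithm with a Fibonacci heap; Chazelle's path-minimum data structure \cite{Chazelle1987} preprocesses $T$ in $O(n)$ time to answer any $d_{uv}$ query in $O(1)$; and for each hyperedge $e$ we fix one vertex of $e$ and query the $d$-value against every other vertex of $e$ in $O(|e|)$ time, summing to $O(\sum_{e} |e|) = O(p)$.

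For correctness I would verify the two inequalities separately. For the lower bound $d_e \leq \strength(e)$, I would take the minimal subtree $T_e$ of $T$ spanning all vertices of $e$, let $V_e = \bigcup_{e' \in T_e} \pi(e')$, and observe $e \subseteq V_e$. By the definition of $d_e$, every edge in $T_e$ has weight at least $d_e$, so every cut of $H[V_e]$ must cross some $\pi(e')$ of weight $\geq d_e$. Thus $\strength(e) \geq \lambda(H[V_e]) \geq d_e$. For the upper bound $\strength(e) \leq p d_e$, I would apply \autoref{lem:strengthboundbycut} after exhibiting a cut of value at most $p d_e$ crossed by $e$: I will argue that deleting all hyperedges of $H$ of weight $\leq d_e$ must disconnect some pair $s,t \in e$, and since the deleted edges number at most $m \leq p$ each of weight $\leq d_e$, the $s$-$t$ cut they induce has total weight at most $p \cdot d_e$.

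The crux is the disconnection claim above, which I would prove by contradiction using the Kruskal order. Assume the vertices of $e$ remain connected in $H$ after removing all hyperedges of weight $\leq d_e$; then in $A(H)$ the corresponding star-edges can be deleted without disconnecting $e$, so the Kruskal-computed maximum spanning tree $T$ joins the vertices of $e$ entirely by star-edges of weight strictly greater than $d_e$. This forces $T_e$ to consist purely of such heavy edges, so every $d_{uv}$ for $u,v \in e$ strictly exceeds $d_e$, contradicting the definition of $d_e$ as the minimum path-weight over pairs in $e$. The main obstacle I expect is exactly this coupling between tree-path minima in $A(H)$ and actual cuts in $H$; it relies on the Kruskal-consistency of $T$ (which we may assume without loss of generality since maximum spanning trees computed greedily have this property) and on the fact that deleting light star-edges in $A(H)$ mirrors deleting light hyperedges in $H$.
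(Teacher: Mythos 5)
Your proposal is correct and follows essentially the same route as the paper: the star approximate graph, the maximum spanning tree with path-minimum queries, the subtree $T_e$ and vertex set $V_e$ for the lower bound, and the Kruskal-consistency argument for the disconnection claim giving the upper bound via \autoref{lem:strengthboundbycut}. The running-time accounting also matches the paper's, so there is nothing further to add.
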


The preceding lemma gives us the desired intervals. 
Using \autoref{thm:estimationtimeweight}, we have the desired theorem. 
\begin{theorem}
\label{thm:8rapproximation}
Given a rank $r$ weighted hypergraph $H$ with weight function $w$, in
$O(p\log^2 n\log p)$ time, one can find a $O(r)$-approximate strength
function of $H$.
\end{theorem}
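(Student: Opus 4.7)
The strategy is the windowing reduction sketched in the discussion just above the theorem: slice the range of strengths into geometric buckets of ratio at most $p^2$, and invoke \autoref{thm:estimationtimeweight} once per bucket. Two ingredients are needed: a cheap coarse strength oracle that assigns each edge to a bucket, and the reduction guaranteeing that strengths inside a bucket are preserved after contracting heavier-bucket edges and deleting lighter-bucket edges.

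For the oracle I would, following the construction immediately preceding the theorem, build the weighted star multigraph $A(H)$, compute a maximum-weight spanning tree $T$ in $O(p+n\log n)$, and install Chazelle's $O(n)$-preprocessing / $O(1)$-query path-minimum data structure on $T$. Setting $d_e := \min_{u,v\in e} d_{uv}$ then takes $O(p)$ total time and yields $d_e \le \strength(e) \le p\,d_e$: the lower bound comes from exhibiting the induced subhypergraph on $V_e := \bigcup_{e' \in T_e}\pi(e')$, whose mincut is at least $d_e$; the upper bound follows from \autoref{lem:strengthboundbycut} once we observe that deleting all hyperedges of weight at most $d_e$ must separate some pair in $e$ — otherwise Kruskal's algorithm would have built $T_e$ using only edges strictly heavier than $d_e$, contradicting the definition of $d_e$.

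Next, let $I_1=[a_1,b_1],\dots,I_t=[a_t,b_t]$ be the maximal intervals of $\bigcup_e [d_e, p d_e]$, which automatically satisfy $b_i \le p^2 a_i$ and $b_i \le a_{i+1}$. Let $E_i := \{e : \strength(e) \in I_i\}$ and form $H_i$ from $H$ by contracting $\bigcup_{j>i} E_j$ and deleting $\bigcup_{j<i} E_j$; by iterated application of \autoref{lem:deletecontract} the image in $H_i$ of each $e \in E_i$ has the same strength as $e$ does in $H$, so it suffices to run $\textsc{Estimation}$ on each $H_i$ separately. Since the total weight of $H_i$ is at most $p\,b_i \le p^3 a_i$ and $a_i$ is a valid lower bound on strengths in $H_i$, \autoref{thm:estimationtimeweight} with $M=p^3$ runs on $H_i$ in $O(p_i \log^2 n \log p)$ time. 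The sequence $H_t, H_{t-1}, \ldots, H_1$ can be built in $O(p)$ time overall, since passing from $H_{i+1}$ to $H_i$ only contracts one bucket and re-inserts the next.

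Summing over $i$ and using that each edge of $H$ survives as an edge in exactly one $H_i$ (with its cardinality only possibly shrunk by contraction), we get $\sum_i p_i \le p$ and hence total runtime $O(p \log^2 n \log p)$, which absorbs the $O(p + n\log n)$ preprocessing; the $O(r)$-approximation guarantee is inherited from the unweighted analysis applied to the implicit multigraph simulating $H$. The step I expect to need most care is the bookkeeping — verifying the telescoping $\sum_i p_i \le p$ and the $O(p)$ total cost of building all $H_i$ — since contraction can merge vertices coming from many different buckets at once, so one has to be careful that each original edge and each vertex is only charged $O(1)$ times across the chain.
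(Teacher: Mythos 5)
Your proposal is correct and follows essentially the same route as the paper: the star multigraph $A(H)$ with a maximum spanning tree and path-minimum queries to get the coarse estimates $d_e \le \strength(e) \le p\,d_e$, the maximal intervals as windows, the contraction/deletion construction of the $H_i$ justified by \autoref{lem:deletecontract}, and one call to \autoref{thm:estimationtimeweight} per window with $M = p^3$. The bookkeeping you flag does go through as you expect, since each hyperedge survives as an edge in exactly one $H_i$, giving $\sum_i p_i \le p$.
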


\bibliographystyle{plain}
\bibliography{hypergraph}

\appendix
\section{$k$-sparse certificate}
\label{sec:sparsecertificate}
We show how to find a $k$-sparse certificate for weighted and unweighted
hypergraphs.

We refer to \cite{ChekuriX17} for terminology. Given a hypergraph $H=(V,E)$ consider an MA-ordering of vertices $v_1,\ldots,v_n$. 
Let $e_1,\ldots,e_m$ be the induced head ordering of the edges. 

For the unweighted version, let $D_k(v)$ be the first $k$ backward edges of $v$ in the
head ordering, or all the backward edges of $v$ if there are fewer
than $k$ backward edges. 
Given $H$ and $k$, $H_k=(V,E_k)$ is defined such that $E_k=\bigcup_{v} D_k(v)$.
It is easy to see $H_k$ can be constructed in linear time once the MA-ordering
have been computed. It is a $k$-sparse certificate follows directly from \cite{ChekuriX17}.
It is also obvious it contain at most $k(n-1)$ edges, since $|D_k(v)|\leq k$ and $D_k(v_1)=\emptyset$. 

The algorithm is similar for the weighted version. For a vertex $v$, assume its
backward edges are $e_1,\ldots,e_\ell$ with weights $w_1,\ldots,w_\ell$, and if $i<j$ then
$e_i$ comes before $e_j$ in the head order of the edges.
Define $w_v(e_i) = \max(\min(k-\sum_{j<i} w_j, w_i),0)$ and $0$ for all other edges.
Define $w'(e) = \max_{v\in V} w_v(e)$.
A weighted hypergraph with the weight function $w'$ is a $k$-sparse certificate of $H$.
With careful bookkeeping, $w'$ can be computed in $O(p)$ time.

The running time is dominated by finding the MA-ordering, which is $O(p+n\log n)$ time. 
\end{document}